\newtheorem{theorem}{Theorem}
\newtheorem{definition}{Definition}
\newtheorem{prop}{Proposition}
\newtheorem{lemma}{Lemma}
\newtheorem{problem}{Problem}
\newcommand{\beq}{\begin{equation}}
\newcommand{\eeq}{\end{equation}}
\newcommand{\barr}{\left[\begin{array}}
\newcommand{\earr}{\end{array}\right]}
\newcommand{\rank}{\mbox{rank}\,}
\newcommand{\fq}{\mathbb{F}_{q}}
\newcommand{\bpf}{\begin{proof}}
\newcommand{\epf}{\end{proof}}
\newcommand{\ftwo}{\ensuremath{\mathbb{F}_{2}}}
\newcommand{\ff}{\ensuremath{\mathbb{F}}}
\newcommand{\bi}{\begin{itemize}}
\newcommand{\ei}{\end{itemize}}
\newcommand{\bnum}{\begin{enumerate}}
\newcommand{\enum}{\end{enumerate}}
\newcommand{\bc}{\begin{center}}
\newcommand{\al}{\alpha}
\title{A Complete algorithm for local inversion of maps: Application to Cryptanalysis}
\author{Virendra Sule\\Dept.\ of Electrical Engineering\\Indian Institute of Technology Bombay, India\\(vrs@ee.iitb.ac.in)}
\begin{document}
\maketitle
\emph{Subject Classification}: cs.CC, cs.CR, cs.SC, cs.LO

\begin{abstract}
For a map (function) $F(x):\ftwo^n\rightarrow\ftwo^n$ and a given $y$ in the image of $F$ the problem of \emph{local inversion} of $F$ is to find all inverse images $x$ in $\ftwo^n$ such that $y=F(x)$. In Cryptology, such a problem arises in Cryptanalysis of One way Functions (OWFs). The well known TMTO attack in Cryptanalysis is a probabilistic algorithm for computing one solution of local inversion using $O(\sqrt N)$ order computation in offline as well as online for $N=2^n$. This paper proposes a complete algorithm for solving the local inversion problem which uses linear complexity for a unique solution in a periodic orbit. The algorithm is shown to require an offline computation to solve a hard problem (possibly requiring exponential computation) and an online computation dependent on $y$ that of repeated forward evaluation $F(x)$ on points $x$ in $\ff_{2^n}$ which is polynomial time at each evaluation. However the forward evaluation is repeated at most as many number of times as the Linear Complexity of the sequence $\{y,F(y),\ldots\}$ to get one possible solution when this sequence is periodic. All other solutions are obtained in chains $\{e,F(e),\ldots\}$ for all points $e$ in the Garden of Eden (GOE) of the map $F$. Hence a solution $x$ exists iff either the former sequence is periodic or a solution occurs in a chain starting from a point in GOE. The online computation then turns out to be polynomial time $O(L^k)$ in the linear complexity $L$ of the sequence to compute one possible solution in a periodic orbit or $O(l)$ the chain length for a fixed $n$. Hence this is a complete algorithm for solving the problem of finding all rational solutions $x$ of the equation $F(x)=y$ for a given $y$ and a map $F$ in $\ff_{2^n}$. Due to the dependence on forward evaluation by $F$ only, in the online computation this algorithm is expected to be scalable for inversion in realistic cases of maps when the linear complexity $L$ of the sequence above or lengths of chains are not exponential in $n$. The chains emerging from the GOE are independent and can be computed in parallel. The hard (NP) offline computation consists of computing all points in the GOE and the set of all orbit lengths of closed orbits of the iterative action of $F$. 
\end{abstract}
\section{Introduction}
Inversion of a function is a fundamental problem of both pure and applied mathematics and has enormously wide implications and important applications. In this paper we consider the problem \emph{local inversion} of a function $F:\ff_{2^n}\mapsto\ff_{2^n}$ which is to find all rational (in $\ff_{2^n}$) solutions $x$ to the equation $F(x)=y$ for given $y$ in $\ff_{2^n}$. For instance, inversion of One Way Functions (OWFs) is a central problem of Cryptanalysis. Although such a problem can obviously be solved by an exponential time brute force search algorithm, the challenge is to develop algorithms which will be practically feasible for solving realistic cases of inversion arising in applications where brute force is infeasible. Another methodology to solve such a problem better than brute force approach is by means of a solver for Boolean equations which should return all solutions $x$. Such a solver in which all solutions are represented by a complete set of implicants of the system of equations is announced in \cite{Implicant}. In this paper we propose an algorithm which utilizes an offline computation on $F(x)$ while the online computation utilizes only the forward computation of the map $F(z)$ on points $z$ and finds all solutions using the sequence $S(F,z)=\{z,F(z),\ldots\}$. Since the forward computation $F(z)$ is assumed to be efficiently feasible, the estimates of online computation required are expected to be better than solving the Boolean system $F(x)=y$ directly. The offline computation however involves hard computation and is assumed to be feasible with sufficient time and memory space for some of the realistic cases of $F$. For such cases local inversion can be practically feasible if the online computation is feasible.  

An important application of local inversion arises in Cryptanalysis of cipher algorithms where it is necessary to find all inputs (arguments) to a known map $F$ for a given output (map value). For instance, the output is associated with a known ciphertext of a known or chosen plaintext while the unknown input is the secret key. The cipher algorithm is designed in such a way that the map $F$ is a OWF which is efficient for forward operation on its argument but hard to invert given an output. The well known probabilistic algorithm, Time Memory Tradeoff (TMTO) attack is often used to solve such an inversion problem. TMTO attack finds one inverse of a map $F:X\rightarrow X$ in the set $X$ of $n$-bit strings given an ouput $y=F(x)$ probabilistically. TMTO attack has exponential complexity (in $n$) and is equivalent to $O(\sqrt{N})$ order Birthday attack where $N=|X|=2^n$. Hence TMTO has a success probability of $63\%$ once an offline data is gathered which is also a problem of $O(\sqrt{N})$ order of computation. Apart from being probabilistic TMTO is not meant for computing all solutions of $F(x)=y$ nor does the method utilize the structure of the finite field possible for strings of $n$-bits. A complete algorithm on the other hand by definition succeeds with certainty in computing the solutions. Purpose of this paper is to propose a complete algorithm for the local inversion problem by utilizing the structure of the map $F$ as an $n$-tuple polynomial function of $n$ variables over $\ftwo$. An important outcome of the result of this paper is that when the local inverse for a given $y$ is unique (which is most likely in a practical situation) and the linear complexity of the sequence $S(F,y)$ is not too large, then a key recovery attack on the cipher algorithm is practically feasible. 

\subsection{Offline and online computations for local inversion}
We show in this paper that the computation of local inversion can be divided into offline and online phases. 
\begin{enumerate}
    \item The offline phase is used for computing the set of all possible periods of closed orbits of iterations of $F$ which may require an exponential order of computation in general. 
    \item Offline computation is also required for computing the \emph{Garden of Eden} of the map $F$ which is the set of all $y$ which are outside the range of $F$ (or those $y$ for which there is no rational solution to the equation $F(x)=y$). This computation is equivalent to computing all satisfying assignments of a Boolean system of equations.
    \item Then in online computation, one local inverse $x$ when it lies in a periodic orbit containing $y$ can be computed in polynomial order of complexity $O(L^k)$ where $L$ is the \emph{linear complexity} of the sequence $S(F,y)=\{y,F(y),F^{(2)}(y),\ldots\}$ under iterations of $F$. An upper bound on $L$ is the maximum period of possible orbits of $F$. 
    \item With the knowledge of GOEs of $F$, other inverses of $y=F(x)$ belong to the sequences under iterations $F^k(z)$ on points $z$ in the GOE, known as chains of $F$ which is a polynomial time online parallel computation independent for each point $z$ in GOE.
\end{enumerate}     
The most important advantage of this splitting of computation is that, since the period of the sequence $S(F,y)$ is the \emph{order} of the \emph{minimal polynomial} in $\ftwo[X]$ the polynomial ring over $\ftwo$, the unique solution of the local inverse in the periodic orbit of $y$ in the above sequence has a polynomial time solution in the degree of the minimal polynomial of the sequence (or its linear complexity $L$). All other solutions too are computed by the forward application of $F$. Hence we show that by taking into account the structure of the finite field and the function $F$, one possible solution of the local inversion can be computed in $O(L^k)$ once the offline computation is completed.

\subsection{Relations with previous literature}
Linear complexity and the minimal polynomial of sequences have been well studied in the Cryptology literature since long \cite{Rueppel,HBCrypto}. The Berlekamp Massey algorithm \cite{vonzerGa} computes the minimal polynomial efficiently. However linear complexity was discovered for modeling and predicting the sequence as generated by the output of a LFSR and hence predicting an exponentially long sequence from just $2L$ samples by a linear recurrence of a polynomial size subsequence. The computation of linear complexity $L$ of a sequence is also not feasible if $L$ is exponential comparable to the period. This paper shows another application of linear complexity and the minimal polynomial of the sequence $S(F,y)$, that of inverting the function $F(x)$ at its value $y$

Our results are further development of ideas on linear representation of maps $F$ using the Koopman operator and its applications recently announced in \cite{ramsule}. It is shown that one solution of the inverse is obtained in polynomial time in $O(L^k)$ where $L$ is the linear complexity of the sequence $S(F,y)$. Use of linear complexity of such sequences for inversion of $F$ is believed to be a new proposition of this paper and follows from the ideas of \cite{rasu}. Other known literature relates to this problem as follows.

\subsubsection{Inversion of permutation polynomials}
The problem of inversion of permutation polynomials in finite fields has a vast literature. References \cite{LidNid, HBFF, zhengww} are good representatives. However inversion of permutation polynomials does not result in a method for local inversion. Recently in \cite{rasu} a linear representation of functions in finite fields is proposed and it is shown that when $F$ (is a permutation function) with the global inverse $G$ then the inverse $G$ as well as its linear representation also follows from that of $F$. The linear representations of inverses are inverses of linear representations. However linear representation of $F$ does not solve the problem of local inversion when $F$ is not globally invertible (since $G$ does not exist).  Hence local inversion problem needs to be solved afresh.

\subsubsection{Relation with TMTO}
As discussed above, the Time Memory Tradeoff (TMTO) attack on OWFs is a well known algorithm for solving the local inversion of a map $F:X\mapsto X$ in $X=\{0,1\}^n$ given $y=F(x)$ in $X$. TMTO algorithm was proposed by Hellman in \cite{Hellman} and has been well researched over last several years as shown in \cite{HongSarkar, Hays, Gango}. TMTO algorithm is however probabilistic and does not utilize structure of the set $X$ as a vector space over a finite field and its success probability is based on the Birthday attack which has complexity of order $O(2^{n/2})$. In practice TMTO attack requires gathering of multiple large sized tables offline not just one to overcome merging of chains, false positive solutions and even failure to capture a solution. However the online search can be carried out in parallel which is advantageous.

\subsubsection{Relation with Boolean equation solving}
An algorithm for representing all solutions (satisfying assignments) of a Boolean systems of equations is proposed in \cite{Implicant}. This algorithm is completely thread parallel hence the time performance improves with increase in memory space. The problem of representing all solutions of Boolean equations is however of $\sharp$-P class whose special case, that of deciding solvability of $3$-CNF system is an NP complete problem. If the Boolean equations are represented in the form $y=F(x)$ for a known $y$ in $\ftwo^n$ with $n$-unknowns taking values in $\ftwo$, this method can be used for local inversion. Hence al solutions of inverses can be solved using the algorithm proposed in \cite{Implicant} better than brute force search. However if the offline computation on $F$ of the method proposed in this paper is feasible, the local inversion does not require solution of non-linear Boolean equations but only depends on forward computations of the function. Hence the algorithm proposed in this paper is a definite way to solve these hard problems even without solving the system of Boolean equations and only resorting to forward computation of the map $F$ if the offline computation is feasible.  

\subsubsection{Related work on local inversion}
Problem of solving rational roots of polynomial of systems of equations over finite fields is one of the well known hard problems of computation. This problem has been addressed in past by algebraic geometric and probabilistic algorithmic methods \cite{cafure, courtois}. The case of deciding the solvability of a multivariate quadratic system is an NP complete problem. Hence in general the local inversion problem is computationally hard. More specifically the problem of inversion of permutation polynomials requires symbolic arithmetic and results in an inverse function \cite{zhengww}. The problem of local inversion of a function $F(x)$ at a point $y=F(x)$ is thus a special counterpart of both of these problems. Any algorithm for finding roots of polynomial systems can solve all rational roots in $\fq$ of the equation $y=F(x)$ given $y$. Similarly if $F$ is a permutation of $\fq$ and $G$ is the global inverse permutation then $x=G(y)$ is the unique inverse image of $y$. However when $F$ is not a permutation these methods cannot solve the local inversion problem. For $\ff_{2^n}$ existence and computation of all rational roots of $y=F(x)$ can be accomplished by the Boolean system solver \cite{Implicant}. However no such algorithm is known for fields in other characteristics. 
\section{Structure of trajectories and solutions}
Solutions to the equation $F(x)=y$ are governed by the structure of trajectories of the iterations of the map $F:\ftwo^n\mapsto\ftwo^n$ and on what part of a trajectory $y$ belongs. A trajectory of the map $F$ through a point $z$ in $\ftwo^n$ is the sequence of iterations $\{z,F(z),F^{(2)}(z)\ldots, F^{(k)}(z),\ldots\}$. Due to finiteness of $\ff_{2^n}$ and invariance of $F$ with respect to $k$, it follows that there exist $0\leq l\leq k$ such that $F^{(k)}(z)=F^{(l)}(z)$ for any $z$. Hence every trajectory can be uniquely split into a periodic part 
\beq\label{periodicorbit}
P(z)=\{F^{(k)}(z),F^{(k+1)}(z),\ldots,F^{(k+m-1)}(z)\}
\eeq
such that 
\[
F^{(k)}(z)=F^{(k+m)}(z)
\]
where $m$ depends on $z$, called a \emph{periodic orbit} of \emph{period} $m$ and a unique segment of a \emph{chain} 
\beq\label{chain}
C(z)=\{z,F(z),\ldots,F^{(k-1)}(z)\}
\eeq
which merges with a periodic orbit or another chain. Points of a periodic orbit of length $m$ are called $m$-periodic points. The \emph{fixed points} of $F$ which are points $z$ such that $F(z)=z$ are $1$-periodic points. The set of all points $z$ which are not in the range of $F$ is called the \emph{Garden of Eden} of $F$. Following lemma follows from this structure of trajectories.

\begin{lemma}\label{strofsolns}
Given an arbitrary point $y$ in $\ftwo^n$ solutions of $F(x)=y$ are described by following possibilities
\begin{enumerate}
    \item $F(x)=y$ has no solution iff $y$ belongs to the GOE of $F$.
    \item $F(x)=y$ has solution in a periodic orbit $P$ iff $y$ belongs to $P$. Such a periodic orbit is unique. Given $y$ in a periodic orbit $P$ and the unique solution to $F(x)=y$ in $P$, all other possible solutions belong to the chains $f^{k}(z),k\geq 1$ for $z$ in the GOE of $F$. 
    \item If $y$ is neither in a periodic orbit nor in the GOE, then all solutions to $F(x)=y$ arise in some of the chains $F^{(k)}(z),k\geq 1$ for $z$ in the GOE. 
\end{enumerate}
\end{lemma}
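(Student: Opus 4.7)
The plan is to handle the three cases by exploiting the trajectory decomposition into a periodic orbit and a chain already described above the statement. First I would dispose of part (1): since $\ftwo^n$ is the disjoint union of $\im F$ and the GOE by definition, the equation $F(x)=y$ has a solution if and only if $y\in \im F$, which is exactly the complement of the GOE.

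For part (2), assume $y$ lies in a periodic orbit $P$ of period $m$. I would first argue that $F$ maps $P$ to itself bijectively, which yields both existence and uniqueness of a preimage of $y$ inside $P$: if two distinct points of $P$ mapped to the same image, iteration would produce a strictly shorter cycle, contradicting the period being $m$; hence $F|_P$ is injective, and being a self-map of a finite set it is a bijection. Next I would rule out any other preimage lying in any periodic orbit: if $x'\notin P$ satisfies $F(x')=y$ and $x'$ belongs to some periodic orbit $P'$, then $y=F(x')\in P'$, forcing $P'=P$ by disjointness of distinct periodic orbits, contradicting $x'\notin P$. Therefore every additional preimage of $y$ must be non-periodic.

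It remains to show that every non-periodic point sits on a chain emanating from some $z\in$ GOE. For this I would take a non-periodic $x'$ and consider its set of iterated ancestors
\[
A=\{w\in \ftwo^n : F^{(j)}(w)=x' \text{ for some } j\geq 0\}.
\]
Since $A$ is finite, it contains an element $z$ attaining the maximum value of $j$. If $z$ admitted a preimage $w$ under $F$, then $w\in A$ with strictly larger $j$, contradicting maximality. Hence $z\in$ GOE, and $x'$ appears on the chain $C(z)$. This same argument immediately yields part (3): if $y$ is neither in GOE nor periodic, then $y$ has at least one preimage $x$; any such $x$ is non-periodic by the disjointness argument from part (2) (else $y=F(x)$ would share an orbit with $x$), so $x$ lies on a chain rooted at a GOE point.

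The main obstacle is conceptual rather than computational: one must verify the two structural facts on which the whole argument hinges, namely that $F$ restricted to a periodic orbit is a bijection and that distinct periodic orbits are disjoint. Both follow from the defining relation $F^{(k+m)}(z)=F^{(k)}(z)$ applied to a putative common point, but they deserve explicit mention because uniqueness inside $P$ and the exhaustive partition of $\ftwo^n$ into orbits plus chains depend on them. Once these are in place, the rest is a routine finiteness and case-analysis argument.
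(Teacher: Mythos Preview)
Your proposal is correct and follows essentially the same route as the paper: both arguments rest on the decomposition of $\ftwo^n$ into periodic orbits and chain segments rooted in the GOE, use disjointness of distinct periodic orbits to rule out preimages in a second orbit, and locate any remaining preimage on a chain from the GOE. The only difference is granularity: the paper asserts the partition and the uniqueness of the predecessor in $P$ in one sentence, whereas you spell out why $F|_P$ is a bijection and supply an explicit maximality argument on the ancestor set to exhibit the GOE root of a non-periodic preimage.
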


\begin{proof}
From the definition of the periodic orbits (\ref{periodicorbit}) and chains (\ref{chain}) it follows that the space $\ftwo^n$ is partitioned by the action of $F$ in periodic orbits and segments of chains from points in GOE to a point which is mapped by $F$ to a periodic orbit. Hence given any point $y$ it is either on a unique chain segment or on a periodic orbit. If there is no solution to $F(x)=y$ then $y$ belongs to GOE conversely if $y$ is in GOE then there is no $x$ such that $F(x)=y$. If $y$ is in a periodic orbit $P$, there is unique predecessor $x$ in $P$ such that $F(x)=y$. Any other solution $x$ which is outside $P$ cannot be in any other periodic orbit since the two orbits cannot intersect at $y$. Hence all other solutions $x$ are on chains merging with $P$ at $y$ under iterations of $F$. Hence every solution outside $P$ is on one of the chains $F^{(k)}(z)$ for $z$ in GOE and $k\geq 1$. 
\end{proof}

\subsection{Solution in a periodic orbit}
We need to revisit some of the well known background on the theory of recurrent sequences over finite fields \cite{LidNid, HBFF} for the special iterative sequences we encounter in this problem of local inversion of maps. The theory of linear recurring sequences over finite fields in \cite{LidNid, HBFF} considers recurrence relations with for sequences and their co-efficients in the same field. On the other hand the sequences encountered in the problem of local inversion are in $\ftwo^n$ and generated iteratively by the map $F$ while the co-efficients are over $\ftwo$. Because of this minor difference we choose to revisit proofs of fundamental results for the special sequences and show that the well known results go though without change. Consider a periodic orbit of iterations of $F$ which contains $y$, denoted as
\[
S(F,y)=\{y,F(y),F^{(2)}(y),\ldots,F^{(N-1)}(y)\}
\]
which has period $N$ which is the smallest number such that $F^{k}(y)=F^{(k+N}(y)$ for $k=0,1,2,\ldots$. We say that the sequence $S(F,y)$ has a \emph{linear recurrence relation} over $\ftwo$ of degree $m$ if there exist constants $\{a_0,a_1,\ldots,a_{(m-1)},a_m\}$ in $\ftwo$ such that
\beq\label{LRR}
a_mF^{(k+m)}(y)+\sum_{i=0}^{(m-1)}a_iF^{(k+i)}(y)=0
\eeq
The periodicity condition on a sequence $S(F,y)$ of period $N$,
\[
F^{(k+N)}(y)=F^{(k)}(y)
\]
gives existence of one such linear recurrence relation with $m=N$. The polynomial
\beq\label{charpoly}
\phi(X)=a_mX^m+\sum_{i=0}^{(m-1)}a_iX^i
\eeq
associated with a linear recurrence relation is called a \emph{characteristic polynomial} of $S(F,y)$. The indeterminate $X$ of a characteristic polynomial represents the operation
\beq\label{operationX}
X^{k}(y)=F^{(k)}(y)
\eeq
($X^0(y)$ is the identity map), which is also compatible with the compositional operation by $F$ as seen by
\[
X^{(k+l)}(y)=F^{(k+l)}(y)=F^{(k)}(F^{(l)}(y))
\]
The linear operation of terms in powers of $X$ is by definition
\[
(X^k+X^l)(y)=X^k(y)+x^l(y)
\]
A monic linear recurrence relation i.e. with $a_m=1$ and of least degree is unique and the monic polynomial $\phi(X)$ associated with it is called the \emph{minimal polynomial}. In a finite field $\fq$ a polynomial $\phi(X)$ in the polynomial ring $\fq[X]$ satisfying $\phi(0)\neq 0$ has an \emph{order}, which is the minimum number $N$ such that $\phi(X)$ is a divisor of $X^N-1$.

Following result follows from this background.

\begin{prop}\label{Existenceofmp}
The sequence $S(F,y)$ is periodic iff it has a minimal polynomial which divides any of its characteristic polynomials. The minimal polynomial satisfies $\al_0\neq 0$ and the period $N$ of $S(F,y)$ is the order of the minimal polynomial.
\end{prop}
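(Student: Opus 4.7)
My plan is to treat the set of all characteristic polynomials of $S(F,y)$ as a subset of $\ftwo[X]$ and exploit the fact that $\ftwo[X]$ is a principal ideal domain. First I would verify that the collection
\[
I(y) = \{\phi(X) \in \ftwo[X] : \phi \text{ is a characteristic polynomial of } S(F,y)\} \cup \{0\}
\]
is an ideal. Closure under addition is immediate from the linearity in (\ref{LRR}), and closure under multiplication by $X$ amounts to shifting the recurrence index forward by one, which is legal precisely because (\ref{LRR}) is required to hold for every $k\geq 0$ and not just at $k=0$. Combining these two properties gives closure under multiplication by an arbitrary polynomial. Whenever $I(y)$ is nontrivial it is therefore generated by a unique monic polynomial, which I will declare to be the minimal polynomial; by construction it divides every other element of $I(y)$, which settles the divisibility claim.

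For the forward direction of the equivalence, suppose $S(F,y)$ is periodic of period $N$. Then the identity $F^{(k+N)}(y)=F^{(k)}(y)$ holds for all $k\geq 0$, so $X^N+1$ lies in $I(y)$, and consequently the minimal polynomial $\phi(X)$ exists and divides $X^N+1$. Because the constant term of $X^N+1$ is $1$, the constant term $\alpha_0$ of $\phi(X)$ is also forced to be $1$, so in particular $\alpha_0\neq 0$. For the converse, assume the minimal polynomial $\phi(X)$ exists with $\alpha_0\neq 0$. Then by the standard property of polynomials over a finite field with nonzero constant term, $\phi$ admits a well-defined order $d$, i.e.\ $\phi(X)\mid X^d-1$ for some least positive integer $d$. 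But then $X^d-1$ is an element of $I(y)$, and the associated recurrence evaluated at $k=0$ yields $F^{(d)}(y)=y$, which places $y$ on a periodic orbit and makes $S(F,y)$ periodic.

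To identify the period $N$ with the order $d$, the inclusion $\phi\mid X^N+1$ gives $d\leq N$ by minimality of $d$, while $F^{(d)}(y)=y$ gives $N\mid d$ and hence $N\leq d$; together they yield $N=d$. The only step I expect to require care is the verification that $I(y)$ is closed under the shift operator $X$, since this is where one could easily slip and obtain a relation that holds only for $k\geq 1$; the remaining steps are essentially an invocation of the principal ideal property of $\ftwo[X]$ together with the classical definition of the order of a polynomial over a finite field. Otherwise the argument is a direct transcription of the scalar theory of linear recurring sequences of \cite{LidNid, HBFF} to our vector-valued setting, which is harmless because the recurrence coefficients remain in $\ftwo$ so that the underlying ideal structure is unchanged.
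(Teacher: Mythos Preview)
Your proposal is correct and follows essentially the same route as the paper. The only difference is packaging: you phrase the argument via the annihilator ideal $I(y)\subset\ftwo[X]$ and invoke the PID property, whereas the paper writes out the division algorithm $\phi(X)=Q(X)m(X)+R(X)$ explicitly to force $R=0$; these are the same computation, and the remaining steps (deducing $\alpha_0\neq 0$ from $m(X)\mid X^N-1$, and matching the period with the order of $m(X)$ by the two inequalities $d\leq N$ and $N\leq d$) coincide with the paper's.
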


\begin{proof}
Let $S(F,y)$ be periodic of period $N$, then it has the characteristic polynomial $\phi(X)=X^N-1$. Hence it has a minimal polynomial. Let the minimal polynomial of $S(F,y)$ be,
\beq\label{minpoly}
m(X)=X^m+\sum_{i=0}^{(m-1)}\al_iX^i
\eeq
Then from the definition of the operation $X$ in (\ref{operationX}) and interpretation of $X^k$ as compositional operation $F^{(k)}(.)$, $m(X)$ is the smallest degree polynomial such that
\[
m(X)(y)\triangleq F^{k+m}(y)+\sum_{i=0}^{(m-1)}\al_iF^{(k+i)}(y)=0
\]
If $\phi(X)$ is any characteristic polynomial and $R(X)$ the residue $\phi(X)\mod m(X)$ then there exists a polynomial $Q(X)$ such that $\phi(X)=Q(X)m(X)+R(X)$. Hence using the above algebraic rules of operation of $X$ on $y$ through the action of $F$ it follows that
\[
\phi(X)y=R(X)y=0
\]
hence due to minimality of degree of $m(X)$ and noting that $\deg R(X)<\deg m(X)$ it follows that $R(X)$ must be a zero polynomial. Hence in particular if $S(F,y)$ is periodic of period $N$ then $m(X)|(X^N-1)$. It also follows from this divisibility by $m(X)$ that $m(0)=\al_0\neq 0$. Moreover since $N$ is the smallest, $N$ must be the order of $m(X)$ as an element of $\ftwo[X]$. Conversely if $S(F,y)$ has minimal recurrence relation and the minimal polynomial satisfies $m(0)\neq 0$ then for $N$ the order of $m(X)$ it is a divisor of $X^N-1$. Hence $(X^N-1)(y)=0$ which proves that $S(F,y)$ is periodic of period $N$.
\end{proof}

\subsubsection{Rank criterion and computation of minimal polynomial}\label{sec:rankcriterion}
Consider the sequence $S(F,y)$. It is a-priori not possible to know whether the sequence is periodic from a short length samples $F^{(k)}(y)$ (for small $k$) since the period can be exponentially large. The issue here is that while the sequence $S(F,y)$ is not fully available for computation, to compute a linear recurrence relation (\ref{LRR}) of shortest degree. In fact this is what is the well known computation of linear complexity of a sequence by the Berlekamp-Massey algorithm. Important point is that such a computation requires the a-priori knowledge of the period of the sequence $S(F,y)$ to stop the search for the degree of the minimal polynomial. We introduce the Hankel matrix
\beq\label{hankelmatrix}
H_{m+j}=
\barr{llll}
y & F^{(j+1)}(y) & \ldots & F^{(j+(m-1))}(y)\\
F(y) & F^{(2)}(y) & \ldots & F^{(j+m)}(y)\\
\dots & \vdots &  & \vdots\\
F^{(j+(m-1)}(y) & F^{(j+m)}(y) &\ldots & F^{(j+(2m-2))}(y)
\earr
\eeq
Following proposition is well known on computation of minimal polynomial of sequences in finite fields. In the present context the sequence is the vector sequence $S(F,y)$ in $\ftwo^n$. Hence the Hankel matrix (\ref{hankelmatrix}) is no more square as in the case of sequences over finite fields but is a block Hankel matrix.

\begin{prop}
Let $S(F,y)$ be periodic then it has minimal polynomial of degree $m$ iff
\beq\label{rankcond}
\rank H_{m+j}=\rank H_{m}=m
\eeq
for all $j=0,1,2,\dots$. The co-efficient vector 
\[
\hat{\alpha}=(\al_0,\ldots,\al_{(m-1)})^T
\]
of the minimal polynomial is the unique solution of
\beq\label{Hankeleqn}
H_m\hat{\alpha}=h(m)
\eeq
where
\[
h(m)=[F^m(y),F^{(m+1)}(y),\ldots,F^{(2m-1)}(y)]^T
\]
\end{prop}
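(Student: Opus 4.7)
The plan is to establish each direction of the biconditional and then derive the formula for the coefficient vector.

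For the forward direction, I assume $S(F,y)$ has minimal polynomial $m(X)=X^m+\sum_{i=0}^{m-1}\al_i X^i$ of degree $m$, which by Proposition~\ref{Existenceofmp} supplies the recurrence $F^{(k+m)}(y)+\sum_i \al_i F^{(k+i)}(y)=0$ for every $k\ge 0$. Viewing the columns of $H_{m+j}$ as windows into the shifted sequences $(F^{(k)}(y),F^{(k+1)}(y),\ldots)$, the recurrence makes the $m$-th shift a fixed $\ftwo$-linear combination of the first $m$, so $\rank H_{m+j}\le m$. For the matching lower bound, I would argue by contradiction: a nontrivial column dependence in $H_m$ would produce coefficients $\beta_0,\ldots,\beta_{m-1}$ with $\sum_i \beta_i F^{(k+i)}(y)=0$ for all row indices appearing in $H_m$, and I would propagate this to every $k\ge 0$ by induction using the $m(X)$-recurrence (to rewrite high-shift iterates in terms of lower ones), producing an annihilating polynomial of degree $<m$ in contradiction of minimality.

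For the backward direction, assume $\rank H_{m+j}=m$ for every $j\ge 0$. Full column rank of $H_m$ kills the homogeneous system $H_m v=0$, so $H_m\hat{\al}=h(m)$ admits at most one solution. Existence comes from rank stability: the augmented data indexed by shift $m,m+1,\ldots$ introduces no new linearly independent direction, which forces $h(m)$ to lie in the column span of $H_m$ and yields a unique $\hat{\al}$. This $\hat{\al}$ gives the recurrence $\sum_i \al_i F^{(k+i)}(y)=F^{(k+m)}(y)$ for the $k$ indexed by rows of $H_m$, and I would promote this to all $k\ge 0$ by induction on the amount of extension, each step using $\rank H_{m+j+1}=\rank H_{m+j}=m$ to conclude that the newly appearing shift must obey the same relation. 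The resulting polynomial $X^m+\sum \al_i X^i$ therefore annihilates $S(F,y)$, and its minimality is exactly $\rank H_m=m$ read in the contrapositive of the forward direction. The formula and uniqueness of $\hat{\al}$ are the immediate output of this analysis.

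The step that will require the most care is propagating the recurrence from the finite window of rows inside $H_m$ to all $k\ge 0$ in the backward direction. Because $S(F,y)$ takes values in $\ftwo^n$, the matrix $H_{m+j}$ is a \emph{block} Hankel matrix whose entries are $n$-vectors, and the rank statement concerns the long columns obtained by flattening blocks to scalars. The cleanest way I see is the following: if the recurrence defined by $\hat{\al}$ failed at some $k=k_0$, then for all sufficiently large $j$ the discrepancy $F^{(k_0+m)}(y)-\sum_i \al_i F^{(k_0+i)}(y)\ne 0$ would contribute a new linearly independent direction to the column space of $H_{m+j}$, forcing $\rank H_{m+j}>m$ and contradicting the hypothesis. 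This converts the rank-stability assumption into the desired global recurrence and closes the proof.
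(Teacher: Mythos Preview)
Your proposal is correct and follows the same approach as the paper: both directions rest on translating the linear recurrence $F^{(k+m)}(y)+\sum_i\al_iF^{(k+i)}(y)=0$ for all shifts into the column-rank stability of the block Hankel matrices, with minimality of $m(X)$ corresponding to $\rank H_m$ being exactly $m$. The paper's own proof is considerably terser and simply asserts that writing the recurrence for $j=0,1,2,\ldots$ is ``equivalent to the rank condition'', whereas you explicitly justify the two points it leaves implicit --- that a column dependence in $H_m$ would propagate (via the already-known degree-$m$ recurrence) to a global annihilator of degree $<m$, and that in the converse the rank stability for all $j$ is precisely what forces the finite-window relation $H_m\hat\al=h(m)$ to extend to every shift $k\ge 0$ --- so your write-up is a more careful version of the same argument rather than a different one.
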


\begin{proof}
In Proposition \ref{Existenceofmp} it is shown that the sequence $S(F,y)$ is periodic iff it has a minimal polynomial. Hence it is required to find its degree and co-efficients using the sequence. Let $m(X)$ denoted in (\ref{minpoly}) be the minimal polynomial of degree $m$. Then there is recurrence relation satisfied by the sequence $S(F,y)$
\[
F^{m+j}(y)+\sum_{i=0}^{(m-1)}\al_iF^{(k+i+j)}(y)=0
\]
writing these equations for $j=0,1,2,\ldots$ and noting that there is a unique solution to the co-effcients $\al_i$ is equivalent to the rank condition (\ref{rankcond}) and the co-efficient vector is the solution of the linear system (\ref{Hankeleqn}).

Conversely if the rank condition is satisfied there is a shortest linear recurrence relation of degree $m$ satisfied by the sequence $S(F,y)$. The co-efficients of the recurrence are obtained uniquely by the linear system (\ref{Hankeleqn}).
\end{proof}

\subsubsection{Algorithm to compute a solution in a periodic orbit}
One solution of $F(x)=y$ when sequence $S(F,y)$ is periodic, is obtained as described in the theorem below.

\begin{theorem}\label{Solutioninperiodicorbit}
Let $S(F,y)$ be a periodic sequence and $m(X)$ as described in (\ref{minpoly}) be its minimal polynomial. Then there is a unique solution to $F(x)=y$ in $S(F,y)$ given by
\beq\label{solution}
x=(1/\al_0)[F^{(m-1)}(y)+\sum_{i=1}^{(m-1)}\al_iF^{(i-1)}(y)]
\eeq
\end{theorem}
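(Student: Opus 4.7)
The plan is to identify the unique pre-image of $y$ that lies on the periodic orbit, then recover it in closed form by applying the minimal-polynomial recurrence at a shifted starting point and folding indices back using periodicity. First I would argue uniqueness inside $S(F,y)$: if $N$ denotes the period of the orbit (which equals the order of $m(X)$ by Proposition~\ref{Existenceofmp}), then any $x = F^{(k)}(y)$ with $0 \leq k \leq N-1$ and $F(x)=y$ satisfies $F^{(k+1)}(y)=y$, and minimality of $N$ forces $k = N-1$. Hence the only candidate in the orbit is $x = F^{(N-1)}(y)$, and $F(x) = F^{(N)}(y) = y$ confirms it really is a pre-image.

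The core step is then to express $F^{(N-1)}(y)$ in terms of $m(X)$ rather than via $N-1$ iterations of $F$. Because $m(X)$ is a linear recurrence for $S(F,y)$, the identity
\[
F^{(k+m)}(y) + \sum_{i=0}^{m-1}\alpha_i F^{(k+i)}(y) = 0
\]
holds for every $k \geq 0$, the ``$+$'' being vector addition in $\ftwo^n$. I would specialize this to $k = N-1$ and reduce all indices modulo $N$ using $F^{(j+N)}(y) = F^{(j)}(y)$: the leading iterate $F^{(N-1+m)}(y)$ collapses to $F^{(m-1)}(y)$, and for $i \geq 1$ each $F^{(N-1+i)}(y)$ collapses to $F^{(i-1)}(y)$, while the $i=0$ term remains $\alpha_0 F^{(N-1)}(y)$. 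Isolating this term and dividing by $\alpha_0$, which is nonzero by Proposition~\ref{Existenceofmp}, reproduces exactly formula (\ref{solution}).

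The main subtlety, rather than a genuine obstacle, is resisting the temptation to verify the formula directly by applying $F$ to its right-hand side and distributing: $F$ is not linear, so this is illegitimate. Correctness must instead be read off the recurrence itself, evaluated at the shifted start $k = N-1$. What makes that evaluation valid is precisely the periodicity of $S(F,y)$ established in Proposition~\ref{Existenceofmp}: it both guarantees that the recurrence holds at every shift of the starting index and permits the index-reduction that converts the high-index iterates $F^{(N-1+i)}(y)$ back into the low-index iterates $F^{(i-1)}(y)$ appearing in (\ref{solution}).
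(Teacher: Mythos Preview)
Your proposal is correct and follows essentially the same route as the paper: identify the unique pre-image in the orbit as $x=F^{(N-1)}(y)$, then use the minimal recurrence (with $\alpha_0\neq 0$) to solve for that term and rewrite the remaining iterates via $F^{(i)}(x)=F^{(i-1)}(y)$. The only cosmetic difference is that the paper phrases the key step as applying the recurrence to the shifted sequence $S(F,x)=S(F,y)$, whereas you phrase it as evaluating the recurrence for $S(F,y)$ at index $k=N-1$ and reducing modulo $N$; these are the same computation.
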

\begin{proof}
Let $S(F,y)$ be periodic of period $N$. The point $x=F^{(N-1)}(y)$ then satisfies the equation $F(x)=y$, hence this is one and the unique solution of the equation in the periodic orbit. But then it follows that for this solution $x$, the periodic sequence $S(F,x)=S(F,y)$. Hence if $m(X)$ as described in (\ref{minpoly}) is the minimal polynomial of $S(F,Y)$,
\[
m(X)(x)=F^{(m)}(x)+\sum_{i=0}^{(m-1)}\al_iF^{i}(x)=0
\]
From this expression of $m(X)(x)$ the term $x$ can be solved uniquely since $\al_0\neq 0$. 
\[
x=(1/\al_0)[F^{(m)}(x)+\sum_{i=1}^{(m-1)}\al_iF^{(i)}(x)]
\]
Then by using the condition $y=F(x)$ one gets the relation (\ref{solution}). This is the expression of the unique solution $x$ in the periodic orbit $S(F,y)$.
\end{proof}
We note following important points about the solution (\ref{solution}).
\begin{enumerate}
    \item The unique solution (\ref{solution}) in the periodic orbit $S(F,y)$ is expressed in terms of the linear combinations of the sequence $S(F,y)$ which consists of only forward operations of the map $F$ (which is assumed to be feasible in practical time).
    \item The number of terms in (\ref{solution}) are $m$ the degree of the minimal polynomial. Hence the solution $x$ is computed in polynomial order of computations in $O(m^k)$ given $y$. 
    \item Since $S(F,y)$ is periodic of period $N$ if it has a minimal polynomial, $x=F^{(N-1)}(y)$ is also the same solution as computed in (\ref{solution}). However $N$ may be exponentially large hence this formula for computation of $x$ is not feasible without knowing the period $N$.
    \item If a solution $x$ is obtained by a possible minimal polynomial as in (\ref{solution}) then knowing the possible period $N$ it can be easily verified whether $x$ lies in the periodic orbit $S(F,y)$ by $F(x)=y$ instead of verifying by the equation $F^{(N-1)}(y)=x$. 
    \item If the degree $m$ itself is of polynomial order in $n$ (number of variables) then the online computation of the solution $x$ in $S(F,y)$ is computable in polynomial time. (The situation is comparable to primitive polynomials of degree $n$ whose order is $N=2^n-1$. Hence $m$ is expected to be of the order $\log N$ in same proportions (density) as primitive polynomials of degree $n$ among all polynomials). 
\end{enumerate}
Based on this theorem the online algorithm to compute one solution $x$ when $y$ is in a periodic orbit $S(F,Y)$ is Algorithm \ref{Solninperiodic}.

\begin{algorithm}
\caption{Online Algorithm: Unique solution in a periodic orbit}\label{Solninperiodic}
\begin{algorithmic}[1]
\Procedure{SolutioninPriodicorbit}{$S(F,y)$}
\State\textbf{Input}: Set of all possible periods $\hat{\Pi}$ (computed offline), $N_m=\mbox{max}\{\hat{\Pi}\}$, $m_0$ a miniumum value of degree below which minimal polynomial does not exist.
\State\textbf{Output}: One solution of $F(x)=y$ in the periodic orbit $S(F,y)$ if one exists otherwise returns that there is no solution in a periodic orbit.
\State Set $m=m_0$ 
\Repeat
\State Compute matrices $H_m$, $H_{(m+1)}$ 
\If{
\[
\rank H_m=\rank H_{m+1}=m
\]
}
    \State find solution $\hat{\alpha}$ (co-efficients of polynomial $m(X)$) in (\ref{Hankeleqn}).
    \For{$N\in\hat{\Pi}$}
    \While{$(m(X)|X^N-1)\bigwedge(F(x)=y)$}
    \State Return solution $x$ as computed in (\ref{solution})
    \State \% Solution in a periodic orbit of period $N$
    \EndWhile
\EndFor
\Else 
\State $m\leftarrow m+1$
\EndIf
\Until{$m>N_{m}$}
\State Return: No solution in periodic orbits.
\EndProcedure
\end{algorithmic}
\end{algorithm}

\subsubsection{Online Incomplete algorithm to compute local inverse}
It is interesting to observe how an online incomplete online algorithm can be constructed which may possibly find a solution given $y$ but without any offline computation. Such an algorithm may be practically useful within certain bounds on the degree of the minimal polynomial.
Let a bound $M$ on the degree of the minimal polynomial $m(X)$ is specified and the step of computation of $m(X)$ is stopped when degree $m$ of a possible minimal polynomial exceeds $M$. The bound $M$ should be theoretically at most equal to any of the possible periods of closed orbits $N$ of $F$. However $M$ is chosen from a point of view of feasibility of the computation of the minimal polynomial using the rank criterion of the the section \ref{sec:rankcriterion}. Hence $M$ may be chosen of polynomial order $O(n^k)$ in the number of variables for suitable $k$. Algorithm \ref{Incomplete} shows this computation.

\begin{algorithm}
\caption{Online Algorithm (Incomplete): Unique solution in a periodic orbit or report no conclusion}\label{Incomplete}
\begin{algorithmic}[1]
\Procedure{PossibleSolutionin}{$S(F,y)$}
\State\textbf{Input}: $y$, $m_0$ a miniumum value of degree below which minimal polynomial does not exist and $M$ maximum value.
\State\textbf{Output}: One solution of $F(x)=y$ in the periodic orbit $S(F,y)$ if one exists with minimal polynomial of degree $m<M$ otherwise returns that there is no conclusion.
\State Set $m=m_0$ 
\Repeat
\State Compute matrices $H_m$, $H_{(m+1)}$ 
\If{
\[
\rank H_m=\rank H_{m+1}=m
\]
}
    \State find solution $\hat{\alpha}$ (co-efficients of polynomial $m(X)$) in (\ref{Hankeleqn})
    \State compute $x$ as a possible solution as in (\ref{solution})
.
    \State compute $N=\mbox{ord}\;m(X)$
    \While{$F(x)=y$}
    \State Return solution $x$ as computed in (\ref{solution})
    \State \% Solution in a periodic orbit of period $N$
    \EndWhile
\Else 
\State $m\leftarrow m+1$
\EndIf
\Until{$m>M$}
\State Return: No conclusion about solution in a periodic orbit.
\EndProcedure
\end{algorithmic}
\end{algorithm}

Algorithm \ref{Incomplete} is practically feasible and results in one correct solution when the linear complexity of $S(F,y)$ is less than $M$. The algorithm eliminates false positive solutions by checking the equation $F(x)=y$ in step $11$.

Theorem \ref{Solutioninperiodicorbit} leaves open the question whether there are other solutions outside the periodic orbit $S(F,y)$ or when such solutions exist when $S(F,y)$ is not periodic. This problem is considered next.

\subsection{Solutions outside periodic orbits}
From the lemma \ref{strofsolns} it follows that solutions which are not in the periodic orbit $S(F,y)$ or if $S(F,y)$ is itself not periodic can only possibly arise on chains $F^k(z)$ starting from $z$ in GOE of $F$. Hence assuming the GOE is computed in offline computation an online algorithm for discovering solutions on chains from GOE uses only forward computations of the map $F$. However an efficient termination condition for the forward computation on a chain is needed.

\subsubsection{Termination of a chain on a periodic orbit and all solutions on the chains}
If $z(k)=F^{(k)}(z)$ is a sequence of points on a chain, the chain terminates when $z(k)$ meets a periodic orbit. If all possible periods of periodic orbits of $F$ are known a-priori then for each $N$ of these periods, the relation $F^{N}(z(k))=z(k)$ seems to be the only way to check whether $z(k)$ is on a periodic orbit of period $N$. Hence it is necessary to show that even when $N$ is exponential order in $n$, evaluation $F^N(z)$ can be carried out efficiently in polynomial time in $\log N$ on any point $z$ in $\ftwo^n$. This is shown in Appendix. Algorithm \ref{solutioninachain} to compute solutions of $F(x)=y$ outside periodic orbits $S(F,y)$ can be as follows.

\begin{algorithm}
\caption{Solution in a chain}\label{solutioninachain}
\begin{algorithmic}[1]
\Procedure{Solutioninachain}{z,y}
\State\textbf{Input}:
\State 1. $z$ a point in GOE (computed offline), 
\State 2. $y$ online value of $F(x)$ to be inverted, 
\State 3. set of all possible periods $\hat{\Pi}$ (computed offline)
\State\textbf{Output}: Solution of $F(x)=y$ on a chain starting from $z$.
\For {$N$ in $\hat{\Pi}$}
\State $k=1$,
\State Compute $z(k)=F^{(k)}(z)$
\While {$z(k)\neq y$}
\If {$F^{(N)}(z(k))=z(k)$}
\State Print: No solution on the chain from $z$.
\EndIf
\EndWhile
\If{$z(k)=y$}
\State a solution is 
\State $x=z(k-1)=F^{(k-1)}(z)$
\EndIf
\State $k\leftarrow k+1$
\State \Comment{The loop always ends because $z(k)$ joins a}
\State \Comment{periodic orbit of period $N$ or is equal to $y$}
\EndFor
\EndProcedure
\end{algorithmic}
\end{algorithm}

Algorithm \ref{solutioninachain} is repeated in parallel for all points $z$ in GOE.
Both the online algorithms Algorithm \ref{Solninperiodic} for computing an inverse in a periodic orbit containing $y$ and Algorithm \ref{solutioninachain} to compute all other solutions, require an a-priori (offline) computation on $F$ which is independent of $y$. This computation creates a data, the set $\hat{\Pi}$ of all possible periods of periodic orbits of $F$ and the set GOE. This offline computation is discussed in the next section.
\section{Computation of possible periods of orbits and the GOE}
For linear maps $L:\ff_{2}^n\mapsto\ff_{2}^n$, \cite{Gill} showed how the computation of the set of all periods of closed orbits and the set of all points in the GOE is accomplished by computing the rational canonical form of $L$ over $\ftwo$ in a fixed basis in $\ftwo^n$. The online algorithms presented in the previous section to compute local inverse of $F$ at $y$ make use of the offline data of all possible periods of closed orbits of the map $F$ and the GOE of $F$. However since $F$ is not linear the linear algebraic computation of \cite{Gill} is no more applicable. In this section we take up the problem of computation of the GOE when $F$ is not linear. We show that the computations of all periods of closed orbits of the map $F$ can be achieved by computing an analogous linear algebraic map whose set of periods of closed orbits contains this information of periods of orbits of $F$. The GOE of $F$ however needs to be computed separately by solving a Boolean system. These two computations are achieved as follows. 
\begin{enumerate}
    \item Computation of all possible periods of orbits of map $F$, by showing that this set belongs to the set of periods of all orbits of a linear map dual to $F$ restricted to a dual vector space. This observation allows the linear algebraic computation of all possible periods by the rational canonical form as shown in \cite{Gill}.
    \item Computation of GOE of the map $F$ by means of the implicant based Boolean solver algorithm of \cite{Implicant}. 
\end{enumerate}
Both of these computations are expected to be hard. However since these are performed offline on $F$ they may be feasible practically for special instances of $F$ given sufficient memory for parallel computation.

\subsection{Linear representation of map $F$}
To compute the information on periods of closed orbits of the map $F$ a suitable representation of the action of the map $F$ in a linear space is logically appropriate. We shall revisit the essential ideas on this representation which have recently been announced in \cite{ramsule}. The map $F$ acts in the space $\ftwo^n$ as a polynomial function of the co-ordinates of the space $\ftwo^n$. Let $V$ denote the $\ftwo$-linear space of $\ftwo$-valued functions on $\ftwo^n$. Denote by $\chi_i$ the co-ordinate functions in $V$ defined by their evaluation at points $x$ in $\ftwo^n$ as
\[
\chi_i(x)=x_i
\]
where $x_i$ is the $i$-th co-ordinate of $x$. The map $F$ has the dual action on functions $\phi$ in $V$ defined by the composition
\[
F^*(\phi(x))\triangleq \phi(F(x))
\]
$F^*$ is thus a linear operator $\phi\circ F$ in $V$. Consider the $F^*$-invariant subspace $W$ which is the smallest invariant subspace of $F^*$ containing all the co-ordinate functions $\chi_i$. The space $W$ can be obtained as the sum of cyclic invariant subspaces $W_i$ generated by $\chi_i$,
\beq\label{ithcyclicspace}
W_i=\mbox{span}_{\ftwo}\{\chi_i,F^*(\chi_i),\ldots,(F^*)^{(m_i-1)}(\chi_i)\}
\eeq
where $m_i$ is the smallest number such that $(F^*)^{m_i}(\chi_i)$ is linearly dependent on $W_i$. Define
\beq\label{spaceW}
W=W_1+W_2+\ldots+W_n
\eeq
This space $W$ is thus the smallest invariant subspace of $F^*$ which contains all co-ordinate functions. Consider the restriction $F_1=F^*|W$. We define a linear dynamic system using the action of $F_1$ in $W$ as shown next.

\subsection{Computation of all periods from the linear representation}
In this section we gather some of the relevant results of \cite{ramsule}. Consider a basis ${\cal B}$ (of functions in $V$) of $W$ as the ordered set
\[
{\cal B}=\{\psi_1,\psi_2,\ldots,\psi_N\}
\]
where $N=\dim W$. Define the matrix representation of $F_1$ by the relation
\beq\label{matrixrep}
F_1[\psi_1,\ldots,\psi_N]=
[\psi_1,\ldots,\psi_N]K
\eeq
where $K$ is an $N\times N$ matrix over $\ftwo$. Columns of $K$ are the co-efficients $k_{ij}$ appearing in the expression
\[
F_1(\psi_i)=\sum_{j=1}^{N}k_{ij}\psi_j
\]
We first show how $F_1$ hence $K$ capture information about the map $F$ and its dynamical trajectories.

\begin{prop}
The map $F$ is a permutation of $\ftwo^n$ iff $F_1$ is a permutation, equivalently the matrix $K$ is non-singular
\end{prop}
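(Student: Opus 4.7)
The plan is to exploit the identity $F^*(\phi)=\phi\circ F$ in both directions, using the facts that $F^*$ carries $F$-bijectivity into an algebraic invertibility statement, and that $W$ already contains enough functions (the coordinate functions) to recover $F$.

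First I would handle the forward direction. Suppose $F$ is a permutation of $\ftwo^n$ with inverse $G$. Define $G^{*}:V\to V$ by $G^{*}(\phi)=\phi\circ G$. A direct check gives $F^{*}G^{*}=G^{*}F^{*}=\mathrm{id}_{V}$, so $F^{*}$ is invertible (hence injective) on all of $V$. In particular its restriction $F_{1}=F^{*}|_{W}$ is injective on $W$. Since $W$ is $F^{*}$-invariant and finite dimensional over $\ftwo$, injectivity of $F_{1}$ forces surjectivity, so $F_{1}$ is a linear bijection of $W$, and the matrix $K$ defined by (\ref{matrixrep}) in any basis $\mathcal{B}$ is non-singular.

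For the converse I would use the crucial hypothesis that $W$ contains all coordinate functions $\chi_{i}$. If $F_{1}$ is a bijection of $W$ (equivalently $K$ is non-singular), then for every $i\in\{1,\ldots,n\}$ there exists a unique $\psi_{i}\in W$ with $F_{1}(\psi_{i})=\chi_{i}$; unwinding the definition of $F^{*}$, this means $\psi_{i}(F(x))=x_{i}$ for all $x\in\ftwo^{n}$. Assembling these functions into a map $G:\ftwo^{n}\to\ftwo^{n}$ by $G(y)=(\psi_{1}(y),\ldots,\psi_{n}(y))$ yields $G\circ F=\mathrm{id}_{\ftwo^{n}}$. Thus $F$ is injective, and since $\ftwo^{n}$ is finite, injective implies bijective, so $F$ is a permutation.

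Finally, the equivalence between $F_{1}$ being a permutation of $W$ and $K$ being non-singular is just the standard correspondence between a linear endomorphism of a finite-dimensional space and its matrix in an ordered basis, applied to (\ref{matrixrep}).

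The main obstacle I anticipate is making sure the converse is airtight: one must observe that the coordinate functions are literally members of $W$ (which is built into its definition as the smallest $F^{*}$-invariant subspace containing the $\chi_{i}$), and that the existence of preimages $\psi_{i}$ is what lets us \emph{construct} a candidate inverse rather than only infer its existence abstractly. The forward direction, by contrast, is essentially bookkeeping once one notices that $(F^{*})^{-1}=G^{*}$; the finiteness of $\dim W$ does the remaining work.
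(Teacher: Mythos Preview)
Your proof is correct and rests on the same pivot as the paper's: the fact that the coordinate functions $\chi_i$ lie in $W$, so surjectivity of $F_1$ is tested precisely on them. The forward direction is essentially identical in content (the paper shows $\ker F^*=0$ directly, you exhibit the inverse $G^*$; both conclude $F_1$ is injective on the finite-dimensional $W$). The converse, however, is organized differently: the paper argues by contrapositive, starting from a point $\beta$ in the GOE of $F$ and concluding that some $\chi_i$ fails to lie in $F_1(W)$, whereas you argue directly, pulling back each $\chi_i$ through $F_1$ to build an explicit left inverse $G=(\psi_1,\ldots,\psi_n)$ of $F$. Your route is cleaner and more constructive (it actually produces the inverse map), while the paper's contrapositive ties the failure of invertibility to a concrete geometric obstruction (the GOE point $\beta$); both are short and neither requires machinery the other avoids.
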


\begin{proof}
Let $\psi$ be any function on $\ftwo^n$. Then $F^*(\psi)= 0$ implies $\psi(F(x))=0$ for all $x$. Hence $F$ is one to one implies $\psi(y)=0$ for all $y$ in $\ftwo^n$ i.e. $\psi$ is the zero function. Hence $F^*$ is also one to one. Hence its restriction $F_1$ in $W$ is also one to one and equivalently the matrix representation $K$ is non-singular.

Conversely, assume $F$ is not surjective, then there exists a point $\beta$ in the GOE of $F$ for which there is no solution $x$ such that $\beta=F(x)$.
Any function $\psi$ in $F^*(W)$ is of the form
\[
\psi=\sum_{i=1}^{N}a_i\psi_i\circ F
\]
where $\psi_i$ are functions in the basis ${\cal B}$. 
Then for any function $\psi$ in $F^*(W)$ as above, the value
\[
\sum_{i=1}^{N}a_i\psi_i(\beta)
\]
does not belong to image the image of $\psi$. In particular for a co-ordinate function $\chi_i$ in $W$ which has a representation in the basis as
\[
\chi_i=\sum_{i=1}^{N}\al_i\psi_i
\]
the value of $i$-th component $\beta_i=\chi_i(\beta)$ does not belong to the image of $F^*\chi_i$. Hence it follows that $\chi_i$ does not belong to $F^*(W)$. But since $W$ is an invariant subspace of $F^*$ which contains all co-ordinate functions $\chi_i$ it follows that $F^*(W)=F_1(W)$ does not contain the function $\chi_i$. Which implies that $F_1$ is not surjective. Equivalently $K$ is singular.
\end{proof}

Above proposition gives a necessary and sufficient condition for $F$ to be a permutation in terms of the matrix representation $K$ on the subspace $W$. Next we draw a correspondence between the trajectories of dynamics of $F$ with dynamics of $K$ in $\ftwo^N$.

\subsubsection{Embedding of trajectories in  trajectories of a Linear dynamic system}
Consider the linear dynamic systems, called hereafter as \emph{Koopman system} corresponding to $F$, defined by iterations by the linear map 
\beq\label{Koopman}
\begin{array}{rcl}
K^T:\ftwo^N & \mapsto & \ftwo^N\\
y & \mapsto & K^Ty
\end{array}
\eeq
For a trajectory
\[
x,F(x),F^{(2)}(x),\ldots
\]
of iterations of the map $F$ in $\ftwo^n$ define an embedding in the trajectories of Koopman system in $\ftwo^N$ corresponding to the basis ${\cal B}$ by
\beq\label{embeddingofpoint}
\begin{array}{lcl}
\ftwo^n & \mapsto & \ftwo^N\\
x & \mapsto & \hat{\psi}(x)
\end{array}
\eeq
where
\[
\hat{\psi}(x)=[\psi_1(x),\psi_2(x),\ldots,\psi_N(x)]^T
\]
Then a trajectory of the map $F$ in $\ftwo^n$ is embedded as a trajectory of the linear map (\ref{Koopman}) in $\ftwo^N$ by
\beq\label{embedding}
\begin{array}{lcl}
(x,F(x),F^{(2)}(x),\ldots) & \mapsto & (\hat{\psi}(x), \hat{\psi}(F(x)),
\hat{\psi}(F^{(2)}(x)),\dots)\\
 & = & (\hat{\psi}(x), F^*\hat{\psi}(x),
(F^*)^2\hat{\psi}(x),\dots)\\
 & = & (\hat{\psi}(x), K^T\hat{\psi}(x),(K^T)^2\hat{\psi}(x),\ldots)
\end{array}
\eeq

\begin{lemma}Following correspondence holds between trajectories of iterations of map $F$ and that of $K^T$.
\begin{enumerate}
\item To every closed orbit of the iteration of the map $F$ there corresponds a closed orbit of the Koopman system (iterations of the linear map (\ref{Koopman})) of the same period
\item To every chain in iterations of $F$ there corresponds a chain in iterations of $K^T$ of same length.
\end{enumerate}
\end{lemma}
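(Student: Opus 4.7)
The proof rests on two ingredients: the semiconjugacy $\hat{\psi}\circ F = K^T\circ\hat{\psi}$ and the injectivity of the embedding $\hat{\psi}:\ftwo^n\to\ftwo^N$. For the semiconjugacy, combining $F_1(\psi_i)=\psi_i\circ F$ with the matrix identity $F_1[\psi_1,\ldots,\psi_N]=[\psi_1,\ldots,\psi_N]K$ yields $\psi_i(F(x))=\sum_j K_{ji}\psi_j(x)$, which in component form is exactly $\hat{\psi}(F(x))=K^T\hat{\psi}(x)$. For injectivity, I would exploit the defining property of $W$ as the smallest $F^*$-invariant subspace containing the coordinate functions: each $\chi_i$ lies in $W$ and hence admits an expansion $\chi_i=\sum_j c_{ij}\psi_j$ in the basis ${\cal B}$. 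If $\hat{\psi}(x)=\hat{\psi}(y)$ then $\psi_j(x)=\psi_j(y)$ for every $j$, forcing $\chi_i(x)=\chi_i(y)$, i.e.\ $x_i=y_i$, for every $i$; thus $x=y$.

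For item 1, let $O=\{x_0,F(x_0),\ldots,F^{(m-1)}(x_0)\}$ be a closed orbit of $F$ of minimal period $m$. Iterating the semiconjugacy shows that $\hat{\psi}(x_0),K^T\hat{\psi}(x_0),\ldots,(K^T)^{m-1}\hat{\psi}(x_0)$ is a trajectory of $K^T$, the points of which are pairwise distinct by injectivity of $\hat{\psi}$. Since $(K^T)^m\hat{\psi}(x_0)=\hat{\psi}(F^{(m)}(x_0))=\hat{\psi}(x_0)$, this is a closed $K^T$-orbit whose period divides $m$. A strictly smaller period $p<m$ would give $(K^T)^p\hat{\psi}(x_0)=\hat{\psi}(x_0)$, and injectivity then yields $F^{(p)}(x_0)=x_0$, contradicting minimality. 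Hence the image is a closed orbit of $K^T$ of period exactly $m$.

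For item 2, let $\{z,F(z),\ldots,F^{(k-1)}(z)\}$ be a chain of $F$ of length $k$ starting at a GOE point $z$, with $F^{(k)}(z)$ the first iterate entering a closed orbit. The image sequence $\hat{\psi}(z),K^T\hat{\psi}(z),\ldots,(K^T)^{k-1}\hat{\psi}(z)$ consists of $k$ distinct points by injectivity, and its successor $(K^T)^k\hat{\psi}(z)=\hat{\psi}(F^{(k)}(z))$ lies on the corresponding closed $K^T$-orbit supplied by item 1. To rule out any of the first $k$ image points already lying on a closed $K^T$-orbit, suppose $(K^T)^{q}(K^T)^j\hat{\psi}(z)=(K^T)^j\hat{\psi}(z)$ for some $0\le j<k$ and $q\ge 1$; injectivity forces $F^{(q+j)}(z)=F^{(j)}(z)$, making $F^{(j)}(z)$ periodic under $F$, which contradicts the assumption that the chain has length $k$. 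Thus the image is a chain of $K^T$ of length $k$.

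The only nontrivial step in the plan is the injectivity of $\hat{\psi}$, and it is precisely here that the construction of $W$ as the smallest $F^*$-invariant subspace containing every coordinate function is essential. Everything else is a routine consequence of the semiconjugacy and of the pigeonhole structure of trajectories in the finite space $\ftwo^n$.
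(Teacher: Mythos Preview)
Your proof follows the same embedding/semiconjugacy argument as the paper but is considerably more complete: you explicitly establish the injectivity of $\hat{\psi}$ (using that each coordinate function $\chi_i$ lies in $W$), which is precisely what is needed to conclude that periods and chain lengths are preserved \emph{exactly} rather than merely divided. The paper's own proof asserts ``Since $N$ is the period of the trajectory it is also the period of the embedded trajectory'' without this justification, so your version actually closes a gap the paper leaves open.
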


\begin{proof}
The proof essentially follows from the embeddings (\ref{embedding}) of trajectories of $F$ into that of trajectories of $K^T$. If there is a periodic trajectory $F^{(k)}(x)$ of $F$ of period $N$ then it satisfies $F^{(k+N)}(x)=F^{(k)}(x)$. Hence under the embeddings (\ref{embeddingofpoint}) and (\ref{embedding}) the trajectory satisfies
\[
(K^T)^{(k+N)}(\hat{\psi}(x))=(K^T)^{k}(\hat{\psi}(x))
\]
Since $N$ is the period of the trajectory it is also the period of the embedded trajectory.

The argument about the embedding of chain and their lengths also follows on similar reasoning.
\end{proof}

From this lemma we get a set of all numbers which contains all periods of closed orbits of $F$ as follows.

\begin{theorem}\label{periodembedding}
The set $\Pi$ of all periods of closed orbits of iterations of $F$ in $\ftwo^n$ is a subset of the set of all periods $\hat{\Pi}$ of closed orbits of iterations of $K^T$ in $\ftwo^N$. $\hat{\Pi}$ can be computed in polynomial time in $N$.
\end{theorem}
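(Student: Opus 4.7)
The plan is to prove the theorem in two parts corresponding to its two assertions. Part one, the inclusion $\Pi\subseteq\hat{\Pi}$, follows almost immediately from the preceding lemma. Part two, polynomial-time computability of $\hat{\Pi}$, I would handle by applying the rational-canonical-form construction for linear maps over $\ftwo$ to the matrix $K^T$ exactly as in \cite{Gill}.

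For the inclusion, let $m\in\Pi$ be arbitrary; then there exists $x_0\in\ftwo^n$ whose $F$-trajectory is a closed orbit of period exactly $m$. The embedding (\ref{embeddingofpoint}) sends $x_0$ to $v_0=\hat{\psi}(x_0)\in\ftwo^N$, and by the computation (\ref{embedding}) the $F$-trajectory of $x_0$ is carried to the $K^T$-trajectory of $v_0$. The first part of the preceding lemma then says this is a closed $K^T$-orbit of period exactly $m$, so $m\in\hat{\Pi}$. The one point worth re-verifying is non-collapse of the period under $\hat{\psi}$: since every coordinate function $\chi_i$ belongs to $W$ and hence is an $\ftwo$-linear combination of the chosen basis $\mathcal{B}=\{\psi_1,\ldots,\psi_N\}$, distinct points in $\ftwo^n$ are separated by some $\psi_j$, so $\hat{\psi}$ is injective and a period-$m$ orbit in $\ftwo^n$ cannot embed into a strictly shorter orbit in $\ftwo^N$.

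For the computability, I would reduce the determination of $\hat{\Pi}$ to pure linear algebra over $\ftwo$. Compute the rational canonical form of $K^T$ by Smith-normal-form reduction of $XI-K^T$ over $\ftwo[X]$ in time polynomial in $N$; this yields invariant factors $d_1(X)\mid d_2(X)\mid\cdots\mid d_r(X)$. By standard structure theory, the periods of closed orbits of $K^T$ are precisely the orders (in the sense defined before Proposition \ref{Existenceofmp}) of polynomial divisors $g(X)$ of the invariant factors with $g(0)\neq 0$; equivalently, each period is the $\lcm$ of the orders of prime-power factors of the $d_i$. Factoring each $d_i$ in $\ftwo[X]$ (polynomial time in $N$), enumerating these divisors and taking $\lcm$s then produces $\hat{\Pi}$. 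The main obstacle I anticipate here is the claim that computing the \emph{exact} order of an irreducible factor of degree $d$ is polynomial in $d$: this order divides $2^d-1$, whose factorization is not known to be polynomial-time, so a literal reading of ``polynomial in $N$'' may require either restricting attention to small $d$ or enlarging $\hat{\Pi}$ to contain all divisors of $\lcm(2^{d_i}-1)$. Either interpretation is harmless downstream, because the theorem only asserts containment $\Pi\subseteq\hat{\Pi}$, and working with a slight overestimate of $\hat{\Pi}$ does not compromise the chain-termination or periodic-orbit checks that use it.
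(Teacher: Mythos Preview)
Your proposal is correct and follows essentially the same approach as the paper: the inclusion $\Pi\subseteq\hat{\Pi}$ is deduced from the preceding lemma on embedding of trajectories, and the computability of $\hat{\Pi}$ is obtained via the rational canonical form of $K^T$ together with the elementary-divisor description of orbit periods from \cite{Gill}. Your added justification of injectivity of $\hat{\psi}$ (via the coordinate functions lying in $W$) fills a gap the paper leaves implicit, and your caveat about computing exact orders of irreducible factors---which requires factoring $2^d-1$---flags a genuine subtlety that the paper's proof simply asserts away; the paper does not address it, so your treatment is in fact more careful on this point.
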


\begin{proof}
That $\Pi\subset\hat{\Pi}$ follows from the above lemma. The rational canonical form of $K^T$ gives all elementary divisors of $K^T$ and as shown in \cite{Gill} the periods of closed orbits of iterations of $K^T$ are obtained as the orders of powers of elementary divisors of $K^T$. Hence computation of $\hat{\Pi}$ can be accomplished in polynomial time in $N$ which is the size of $K^T$.
\end{proof}

For the sake of clarity of the computation of $\hat{\Pi}$, the set of periods of all closed orbits of iterations of $K^T$, we present the well known result from \cite{Gill} here without proof.

\begin{theorem}\label{Rationalformandperiods}
Consider the linear map $A:\ftwo^n\mapsto\ftwo^n$. Let 
\[
\{d_1(X),d_2(X),\ldots,d_w(X)\}
\]
denote the elementary divisors of $A$ in $\ftwo[X]$ such that $d_i(0)\neq 0$ and let each $d_i(X)$ has the form
\[
d_i(X)=(p_i(X))^{e_i}
\]
where $p_i(X)$ are irreducible polynomials in $\ftwo[X]$. Then the set of all periods of closed orbits of iteration of the map $A$ is given by
\[
\Pi(A)=\{1,T_{ij},i=1,\ldots,w, j=1,\ldots,e_i\}
\]
where
\[
\{T_{ij}=\mbox{Order}\;(p_i(X))^j,j=1,2,\ldots,e_i\}
\]
\end{theorem}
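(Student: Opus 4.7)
The plan is to exploit the structure of $\ftwo^n$ as a $\ftwo[X]$-module with $X$ acting by $A$. Applying the rational canonical form decomposition attached to the elementary divisors gives
\[
\ftwo^n \;\cong\; \bigoplus_{i=1}^{w} \ftwo[X]/(p_i(X)^{e_i}),
\]
and the hypothesis $d_i(0) \neq 0$ forces $p_i(0) \neq 0$ for every $i$, so $X$ is a unit modulo each $p_i(X)^{e_i}$. Consequently $A$ is invertible on each summand, every orbit of $A$ is purely periodic (no merging chains from a GOE arise in the invertible regime), and the period of any vector coincides with its additive order under the $X$-action.

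I would then analyse a single cyclic summand $M_i = \ftwo[X]/(p_i(X)^{e_i})$. For a nonzero class $[f(X)] \in M_i$, cancelling the maximal power of $p_i(X)$ from $f(X)$ identifies a unique $j \in \{1,\ldots,e_i\}$ for which the annihilator of $[f(X)]$ is the ideal $(p_i(X)^j)$. The period of $[f(X)]$ under multiplication by $X$ is then the smallest $T \ge 1$ with $X^T \equiv 1 \pmod{p_i(X)^j}$, which is exactly $T_{ij} = \ord(p_i(X)^j)$. Each $j$ is realised, for example by the class of $p_i(X)^{e_i-j}$, so every $T_{ij}$ does occur as an orbit period of $A$.

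For a general vector $v = v_1 \oplus \cdots \oplus v_w$ I would combine components: its period is the least common multiple of the periods of the $v_i$, with a zero component contributing $1$. The zero vector itself has period $1$, which explains the explicit $1$ in the statement. Under Gill's convention, $\Pi(A)$ is described by the generating list $\{1\} \cup \{T_{ij}\}$, closed under lcm across summands; any alternative reading that takes $\Pi(A)$ as a literal set of every achievable period is recovered by forming all such lcms.

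The main obstacle is essentially invoking the right structure theorem rather than any fresh idea: one needs the rational canonical form over $\ftwo$ with the stated elementary-divisor invariants, which follows from Smith normal form applied to $XI - A$ in $\ftwo[X]^{n \times n}$. Granting this, the proof reduces to the annihilator computation inside one cyclic component and the lcm combination across summands, which is why the paper cites \cite{Gill} and omits the details.
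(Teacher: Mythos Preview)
Your proposal is correct and follows the standard module-theoretic argument via the rational canonical form. Note, however, that the paper does not actually give its own proof of this theorem: it is stated explicitly ``without proof'' and the reader is referred to section~12.10 of \cite{Gill}. Your sketch is precisely the approach taken there---decompose $\ftwo^n$ as a direct sum of primary cyclic $\ftwo[X]$-modules, read off the period of a vector in a single summand as the order of its annihilator $p_i(X)^j$, and combine across summands by $\lcm$. You also correctly flag the one interpretive wrinkle: as literally written, the set $\{1,T_{ij}\}$ lists the generators of the period lattice rather than its closure under $\lcm$, and Gill's convention is indeed to present the generating list. So there is nothing to compare against in the paper itself; your argument is sound and is the intended one.
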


(Here $\mbox{Order}\;p(X)$ of a polynomial satisfying $p(0)\neq 0$ is the smallest number $T$ such that $p(X)$ divides $X^T-1$). For a proof of this well known theorem reader may see section 12.10 in \cite{Gill}. The theorem above shows that the computation of the set $\Pi(A)$ of periods of the map $A$ is accomplished by computing the rational canonical form of $A$ to compute all elementary divisors of $A$. 

\subsubsection{Offline algorithm for computing the set of all possible periods}
An offline algorithm can thus find the input $\hat{\Pi}$ required in Algorithms 1 and 2 in following by computing $\Pi(K)$ and treating it as $\hat{\Pi}$ since by theorem \ref{periodembedding} $\hat{\Pi}$ contains all possible periods of closed orbits of iterations of $F$.
\begin{enumerate}
\item Compute the matrix $K$, a representation of $F^*$ on the invariant space $W$ in a fixed basis.
\item Compute the set $\hat{\Pi}$ of all periods of closed orbits of $K$ by computing the rational canonical form of $K$ and using theorem (\ref{Rationalformandperiods}). (Note that this set is same as that of set of all periods of closed orbits of iterations of $K^T$). 
\end{enumerate}

The computation of the matrix representation $K$ of $F$ may turn out to be a hard computation of $NP$ class. This is because the dimensions of the subspaces $W_i$ in (\ref{ithcyclicspace}) are likely to grow very fast with $n$ depending on the degrees of polynomial components of the map $F$. Also the computation of composition $F^*(f)=f\circ F$ is complicated although theoretically of polynomial time in $n$. However as this computation is offline it has much relaxed conditions on practical time and memory issues as compared to online computation. However the computation of $\hat{\Pi}$ is polynomial time in $N$ the dimension of the space $W$. Doing this hard computation offline is the price one has to pay for getting a complete algorithm for inversion. 

\subsection{Computation of the GOE}
The input required for the Algorithm 2 is the set GOE of $F$. This computation is formalised as the following Boolean system problem.

\begin{problem}\label{GOEproblem}
Given the map $F$ in polynomial form, determine the set $G$ of all simultaneous assignments $\{y_i,i=1,\dots,n\}$ in $\ftwo$ such that for the vector $y=(y_1,\ldots,y_n)^T$ the Boolean system of equations
\[
F(x_1,\ldots,x_n)=y
\]
is not satisfiable for any assignments $\{x_i,i=1,\dots,n\}$ in $\ftwo$. Then GOE of $F$ is the set of all such vectors $y$ in $\ftwo^n$.
\end{problem}

Computation of GOE can be accomplished in offline by means of an algorithm which can represent all solutions of a Boolean system of equation. Such an algorithm is announced in \cite{Implicant}. The algorithm is in general of exponential order in $n$. However the algorithm is parallel and given sufficient memory has complexity $O(n)$ for parallel computation of an $n$ variable Boolean system. This computation is described in the next section. 

\section{Computation of the GOE in terms of implicants}
The final offline computation required to solve the local inversion problem is the computation of the GOE of the map $F$. As stated in Problem \ref{GOEproblem}, a strategy to solve this problem makes use of the Boolean system $F(X)=Y$ defined by the map $F$. Let the map $F$ be described in terms of its component functions of $n$ variables as
\[
F(x_1,\ldots,x_n)=[f_1(x_1,\ldots,x_n),\ldots,f_n(x_1,\ldots,x_n)]^T
\]
where $f_i$ are Boolean functions of $n$ Boolean variables variables $x_i$ taking values in the Boolean ring $\ftwo$. We shall briefly revisit background of the \emph{implicant} based representation of all solutions of Boolean systems from \cite{Implicant}. Let $g(x_1,\ldots,x_n)$ be a Boolean function of $X=\{x_1,\ldots,x_n\}$. A term $t(X)$ in Boolean variables $X$ is an elementary conjunction
\[
t(X)=\prod_{i\in\{1:n\}}x_i^{a_i}
\]
where $a_i=0,1$ and $x^a=x$ for $a=1$ while $x^a=x'$ (the compliment) if $a=0$. An implicant of a Boolean function $g(X)$ is a term $t(X)$ such that if $t(a)=1$ for any assignment $X=a$ for $a\in\ftwo^n$ then $g(a)=1$. A set $I_g$ of implicants of a function $g$ is said to be \emph{complete} if $g(a)=1$ for any assignment $X=a$ then there exists a term $t$ in $I_g$ such that $t(a)=1$. A basic fact in Boolean algebra is that if $I_g$ is a complete set of implicants of a function $g$ then
\[
g(X)=\sum_{t\in I_g}t(X)
\]
One of the fundamental problems of computation in Boolean algebra is the computation of the representation of all \emph{satisfying} assignments of a Boolean function $g$ of $n$ variables. An assignment $X=a$ is said to satisfy a Boolean function $g$ if $g(a)=1$. Such a representation is accomplished by an algorithm using implicants in \cite{Implicant}. This shall be described next.

\subsection{Representation of all satisfying assignments}
A set of nonzero \emph{orthogonal} (OG) terms $\{t_1,\ldots,t_m\}$ is one in which $t_it_j=0$ for $i\neq j$. An OG set of terms is said to be \emph{orthonormal} (ON) if
\[
\sum_{i}t_i(X)=1
\]
Hence following fact follows.

\begin{lemma}
If $I_g$ is a complete set of OG implicants of a Boolean function $g$ then the set of all satisfying assignments of $g$ is the set
\beq\label{satassofg}
S(g)=\cup_{t_i\in I_g}S(t_i)
\eeq
where $S(t_i)$ is the satisfying assignment of the implicant $t_i$.
\end{lemma}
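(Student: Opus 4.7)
The plan is to establish the claimed set equality $S(g) = \bigcup_{t_i \in I_g} S(t_i)$ by proving the two inclusions separately, each directly from the definitions given immediately above the statement. The orthogonality assumption on $I_g$ is not actually needed for the set-theoretic equality itself; it is a structural bonus that makes the union on the right disjoint (so in particular $S(g)$ is partitioned by the $S(t_i)$). I would remark on this after the main argument.

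For the inclusion $\bigcup_{t_i \in I_g} S(t_i) \subseteq S(g)$, I would take an arbitrary $a \in \bigcup_{t_i \in I_g} S(t_i)$, so $a \in S(t_i)$ for at least one $i$, meaning $t_i(a) = 1$. The definition of an implicant of $g$ recalled just above the lemma says precisely that if $t_i(a) = 1$ then $g(a) = 1$. Hence $a \in S(g)$, which gives this inclusion.

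For the reverse inclusion $S(g) \subseteq \bigcup_{t_i \in I_g} S(t_i)$, I would take $a \in S(g)$, so $g(a) = 1$. This is exactly the situation in which the completeness hypothesis applies: by definition, a complete set of implicants $I_g$ contains some $t_i$ with $t_i(a) = 1$, that is, $a \in S(t_i)$, and hence $a$ lies in the union.

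There is no real obstacle here — the lemma is essentially a repackaging of the definitions of implicant, complete set of implicants, and satisfying assignment into the language of sets. The only subtlety worth flagging is that the OG hypothesis is inert in the proof of equality; it is used later (e.g.\ in counting satisfying assignments, or in making $S(g)$ a disjoint union so that the sum-of-implicants identity $g(X) = \sum_{t \in I_g} t(X)$ reads cleanly over $\mathbb{F}_2$ without cancellations). I would add a one-line remark noting this so the role of the OG assumption is not misleading.
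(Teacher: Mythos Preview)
Your proposal is correct and matches the paper's treatment: the paper does not write out a formal proof but presents the lemma as an immediate consequence of the definitions of implicant and complete set, and then remarks (just as you do) that the OG hypothesis is what makes the union in (\ref{satassofg}) disjoint rather than being needed for the equality itself. Your two-inclusion argument is exactly the unpacking of those definitions that the paper leaves implicit.
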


\subsubsection{Implicants of a system}
For any implicant $t(X)$ the satisfying assignment $t(a)=1$ is a partial assignment of all literals in $t(X)$ to equal $1$. Hence the above expression (\ref{satassofg}) is a compact representation of all satisfying assignments of $g$. The union in (\ref{satassofg}) is disjoint because the implicants are OG hence for any $t_i=1$, $t_j=0$ at the same assignment for $j\neq i$.

The implicant based solver for a Boolean system of equations of \cite{Implicant} computes the set set of all satisfying assignments of a Boolean system in terms of a complete OG set of implicants of the system. Here an implicant of a system of Boolean functions $G=\{g_i(X)\}$, $i=1,\dots,m\}$ is meant a term $t(X)$ such that if $t(a)=1$ for an assignment $X=a$ then $g_i(a)=1$ for all $i=1,\ldots,m$. Hence such a system of Boolean functions is represented by a single function $g$ in terms of a complete set $I_G$ of implicants of the system $G$
\[
g(X)=\sum_{t\in I_G}t(X)
\]
Then we have
\[
g(X)=\prod_{i=1}^{m}g_i(X)
\]
Hence the set of all satisfying assignments of the system $G$ is given by 
\beq\label{allsatofG}
S(G)=S(g)=\sum_{t\in I_G}S(t)
\eeq

\subsection{Formulation of the problem of GOE}
Consider the map $F$ in Problem 1, then GOE of $F$ consists of points $(y_1,y_2,\ldots,y_n)$ in $\ftwo^n$ such that 
\[
F(x_1,\ldots,x_n)=(y_1,\ldots,y_n)^T
\]
has no satisfying assignments $X$. Consider the above system of Boolean equations in variables $X$ and $Y=(y_1,\ldots,y_n)$, denoted as $F(X)\oplus Y=0$. Let a complete set of OG implicants for satisfying assignments of this system of equations be denoted as $I(F,Y)$. Then each implicant in this set is of the form
\[
t(X,Y)=t_1(X)t_2(Y)
\]
where $t_1$ and $t_2$ are terms in $X$ and $Y$ respectively such that for assignments $t_1(a)=1$, $t_2(b)=1$, $(a,b)$ satisfies $t(a,b)=1$ for some $t$ in the set $I(F,Y)$. Define the function
\beq\label{functionphi}
\phi(Y)=\sum_{t\in I(F,Y)}t_2(Y)
\eeq

\begin{theorem}
GOE of $F$ is the set $S(\phi(Y)')$ or alternatively,
\[
\mbox{GOE}=\{Y|\phi(Y)=0\}
\]
\end{theorem}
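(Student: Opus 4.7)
The plan is to show that $\phi(Y)$ is precisely the indicator function of the image of $F$, so that its complement indicates the GOE. I will work with the complete OG set of implicants $I(F,Y)$ of the augmented Boolean system $F(X)\oplus Y = 0$ (in the $2n$ variables $X,Y$), using the representation of all satisfying assignments recalled in equation (\ref{allsatofG}).

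First I would observe that since the variables $X$ and $Y$ are disjoint, each implicant $t(X,Y)$ of the system factors uniquely as $t(X,Y) = t_1(X)t_2(Y)$, where $t_1$ is the conjunction of literals in $X$ and $t_2$ the conjunction of literals in $Y$ appearing in $t$. Moreover, any nonzero term in Boolean variables has at least one satisfying assignment: given any fixing of the literals appearing in $t_1$, and any arbitrary extension to the remaining $x_i$, the value $t_1 = 1$ is achieved, and similarly for $t_2$. This simple but crucial fact is what decouples the existence of an $X$-witness from the choice of $Y$.

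Next I would prove both directions of the claim $Y \in \mathrm{GOE} \iff \phi(Y) = 0$. For the forward direction, suppose $\phi(b) = 1$ at some assignment $Y=b$. Then there is at least one $t = t_1t_2 \in I(F,Y)$ with $t_2(b) = 1$. Pick any assignment $X=a$ such that $t_1(a) = 1$; this is possible by the observation above. Then $t(a,b) = t_1(a)t_2(b) = 1$, so $(a,b)$ satisfies $F(X)\oplus Y = 0$, i.e.\ $F(a) = b$, hence $b \notin \mathrm{GOE}$. Contrapositively, $Y \in \mathrm{GOE} \Rightarrow \phi(Y) = 0$. For the reverse direction, suppose $b \notin \mathrm{GOE}$, so that there exists $a$ with $F(a) = b$, i.e.\ $(a,b)$ satisfies the system. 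By completeness of $I(F,Y)$, there is some $t=t_1t_2 \in I(F,Y)$ with $t(a,b) = t_1(a)t_2(b) = 1$, and in particular $t_2(b)=1$, which by definition (\ref{functionphi}) gives $\phi(b) = 1$. Hence $\phi(Y) = 0 \Rightarrow Y \in \mathrm{GOE}$.

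Putting the two directions together yields $\mathrm{GOE} = \{Y \mid \phi(Y) = 0\} = S(\phi(Y)')$, as claimed. The only genuinely delicate point is the justification that $\phi(b)=1$ actually produces a companion assignment $a$ in $X$, which is why the factorization $t = t_1t_2$ and the nonvanishing of terms in $I(F,Y)$ must be invoked explicitly; the orthogonality of the implicants is not needed for correctness of the set equality, it only ensures the representation of $\phi$ is non-redundant. No obstacle beyond this bookkeeping is expected.
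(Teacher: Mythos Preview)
Your proposal is correct and follows essentially the same approach as the paper: factor each implicant as $t_1(X)t_2(Y)$ and argue that $\phi(Y)=\sum t_2(Y)$ is the indicator of the image of $F$. Your version is in fact more careful than the paper's, which treats the direction ``$\phi(b)=1\Rightarrow b\notin\mathrm{GOE}$'' only implicitly; you spell out explicitly that a nonzero term $t_1$ always admits a satisfying assignment $a$, and you correctly remark that orthogonality plays no role in the set equality itself.
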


\begin{proof}
Since $t$ is itself an implicant of the system $F(X)\oplus Y=0$, for an assignment $X=a,Y=b$ which satisfies this equation, $t_1(a)=t_2(b)=1$. From this it follows that the set of all assignments of $Y$ for which there exist assignments of $X$ of the system is obtained as satisfying assignments of
\[
\sum_{t\in I(F,Y)}t_2(Y)=1
\]
Hence all assignments of $Y$ for which there is no solution $X$ of the equation are equal to satisfying assignments of all the equations
\[
t_2(Y)=0,\forall t\in I(F,Y)
\]
This is expressed as the set of all solutions of the the equation
\[
\sum_{t\in I(F,Y)}t_2(Y)=0
\]
Defining the function $\phi(Y)$ as given the result follows.
\end{proof}

Such a set of assignments $Y$ can be thus computed by using the implicant algorithm again on the function $\phi(Y)'$.

\subsubsection{An example of computation of GOE using the Boolean solver algorithm}
Consider the map
\[
F(x_1,x_2,x_3)=[x_1\oplus x_2x_3\oplus x_1x_2x_3,x_1\oplus x_2,x_2\oplus x_1x_3]^T
\]
The system $F(X)\oplus Y=0$ above in this case is
\[
\begin{array}{rcl}
   g_1=x_1\oplus x_2x_3\oplus x_1x_2x_3\oplus y_1 & = &0  \\
    g_2=x_1\oplus x_2\oplus y_2 & = & 0\\
    g_3=x_2\oplus x_1x_3\oplus y_3 & = & 0
\end{array}
\]
which we shall denote as $S(X,Y)$. Define 
\[
g(X,Y)=(g_1(X,Y)\oplus 1)(g_2(X,Y)\oplus 1)(g_3(X,Y)\oplus 1)
\]
Then a complete set of implicants of $g(X,Y)$ is a complete set of implicants of the system $S(X,Y)$. We find this set using the Boolean system solver algorithm of \cite{Implicant}. Using the ON set $I=\{x_1,x_1'x_2,x_1'x_2'\}=\{t_i,i=1,2,3\}$ we find implicants of the function $g(X,Y)$.
\[
\begin{array}{llll}
\mbox{Pivote function $g$} & g/t_1 & g/t_2 & g/t_3  \\
g_2\oplus 1=1\oplus x_1\oplus x_2\oplus y_2 & x_2\oplus y_2 & y_2 & y_2'\\
\end{array}
\]
Hence $I(g_2\oplus 1)=\{x_1x_2'y_2,x_1x_2y_2',x_1'x_2y_2,x_1'x_2'y_2'\}$ denoted as $\{t_i,i=1,\dots,4\}$ next we use this implicant set
\[
\begin{array}{lllll}
\mbox{Pivote function $g$} & g/t_1 & g/t_2 & g/t_3 & g/t_4  \\
g_1\oplus 1=1\oplus x_1\oplus x_2x_3\oplus x_1x_2x_3\oplus y_1 & 
y_1 & y_1 & x_3\oplus y_1' & y_1\\
\end{array}
\]
Hence 
\[
I((g_1\oplus 1)(g_2\oplus 1))=\{x_1x_2'y_1y_2,x_1x_2y_1y_2',x_1'x_2y_2x_3y_1,x_1'x_2y_2x_3'y_1',x_1'x_2'y_1'y_2'\}
\]
This set has five implicants $\{t_i,i=1,ldots,5\}$. We next take the final pivot to find complete implicant set of the function $g$ or the system $S(X,Y)$.
\[
\begin{array}{llllll}
\mbox{Pivote function $g$} & g/t_1 & g/t_2 & g/t_3 & g/t_4 & g/t_5  \\
g_3\oplus 1=1\oplus x_2\oplus x_1x_3\oplus y_3 & x_3\oplus y_3' &
x_3\oplus y_3 & y_3 & y_3 & y_3'\\
\end{array}
\]
Hence the complete set of implicants of the system $F(X)=Y$ is
\[
\begin{array}{l}
\{x_1x_2'y_1y_2x_3'y_3',x_1x_2y_1y_2'x_3y_3',x_1'x_2x_3y_1y_2y_3,
x_1x_2'y_1y_2x_3y_3,x_1x_2y_1y_2'x_3'y_3,\\x_1'x_2x_3'y_1'y_2y_3,
x_1'x_2'y_1'y_2'y_3'\}
\end{array}
\]
Expressing these as products $\{t_{i1}(X)t_{i2}(Y)\}$ the factors affecting assignments of $Y$ for which there exist solution $X$ are
\[
T(Y)=\{y_1y_2y_3',y_1y_2'y_3',y_1y_2y_3,y_1y_2'y_3,y_1'y_2y_3,y_1'y_2'y_3'\}
\]
These are six minterms in $y_1,y_2,y_3$. Hence all points in GOE are given by solutions of
\[
\phi(Y)=\sum_{i}t_{i2}(Y)=0
\]
Since all minterms in $Y$ add up to $1$ the above equation is equivalent to the following equation in terms of minterms not present in the above set $T(Y)$,
\[
y_1'y_2y_3'+y_1'y_2'y_3=1
\]
On simplifying, this equation is equivalent to
\[
y_1'(y_2\oplus y_3)=1
\]
Hence $y_1=0$ along with pairs $(y_2=0,y_3=1)$, $(y_2=1,y_3=0)$. Hence GOE of $F$ consists of two points
\[
\mbox{GOE}=\{(0,0,1),(0,1,0)\}
\]
To verify correctness of the GOE computed above we can see actual trajectories of the map $F$ as follows
\[
\begin{array}{ll}
\mbox{orbit of length $1$} & (0,0,0)\mapsto (0,0,0) \\
\mbox{orbit of length $4$} & (1,0,0)\mapsto (1,1,0)\mapsto (1,0,1)
\mapsto (1,1,1)\mapsto (1,0,0)\\
\mbox{chain of length $2$} & (0,1,0)\mapsto (0,1,1)\mapsto (1,1,1)\\
\mbox{chain of length $1$} & (0,0,1)\mapsto (0,0,0)
\end{array}
\]
Hence the actual trajectories show that GOE consists of the predicted two points $(0,1,0),(0,0,1)$.
\section{Applications to Cryptanalysis}
In this section we briefly discuss application of the local inversion algorithm we have developed in previous sections to illustrate Cryptanalysis of block and stream ciphers.

\subsection{Cryptanalysis of block ciphers}
A block cipher is a map $E:\ftwo^{n+m}\mapsto \ftwo^p$ where $n$ is the length in bits of the symmetric key $K$, $m$ is the length of the input plaintext block $P$ and $p$ is the length of the output ciphertext $C$. $E$ is called an \emph{encryption} function and $C$ the encryption of $P$. The map $E$ satisfies the properties:
\begin{enumerate}
    \item Computation of $C$ for given $P$ and $K$ is efficiently feasible.
    \item Computation of $P$ given $C$ and $K$ is efficiently feasible. 
    \item For any fixed $K$, $E(K,X):\ftwo^m\mapsto\ftwo^p$ is a practical OWF. Alternatively this condition is same as, given a pair $(P,C)$ such that $C=E(K,P)$ for some $K$ it is practically infeasible to find any bit of $K$.
\end{enumerate}

The Cryptanalysis problem in its general sense is usually understood as recovery of $K$ given $(P,C)$ pair of blocks. Such a problem is practically justified because same key $K$ is used in encryption of many plaintext blocks $P$. Some of these blocks are known a-priori. Hence if a likely block $P$ is fixed the problem is to compute $K$ given $C=E(K,P)$. Hence this is a local inversion problem of $F(X)=E(X,P)$ when $n=m=p$ which we shall address here.

\subsubsection{Algorithm for cryptanalysis}
The theory of local inversion developed in previous sections leads to Algorithm 3 for cryptanalysis of block ciphers of above form with $n=m=p$.

\begin{algorithm}
\caption{Cryptanalysis of block ciphers $C=E(K,P)$}
\begin{algorithmic}[1]
\Procedure{BlockcipherCryptanalysis}{$C=E(K,P)$}
\State\textbf{Input}: The algorithm $E$ and one pair $(P,C)$.
\State\textbf{Output}: All solutions $K$ such that $C=E(K,P)$.
\State \emph{Offline Computation}: 
    \State Compute the set $\hat{\Pi}$ of all possible periods of closed orbits of the map $F(X)=E(X,P)$.
\State \emph{Offline Computation}: 
    \State Compute the set GOE of the map $F$.
\State \emph{Online Computation}: 
    \State Use Algorithm 1 to find the minimal polynomial of the sequence $S(F,C)$ and one solution if the sequence turns out periodic.
\State \emph{Online Computation}: 
    \State Find all other solutions using the Algorithm 2.
\EndProcedure
\end{algorithmic}
\end{algorithm}

The algorithm 3 shows that once the offline computation of periods of closed orbits and GOE of $F(X)=E(X,P)$ is carried out apriori, the online computation of all $K$ for a given $C$ can be achieved only by forward iterative action by $F$ which is computationally very efficient.

\subsection{Cryptanalysis of stream ciphers}
Stream ciphers come with an iteration map (a dynamical system) with an output function. A general model of such an algorithm is
\beq\label{streamcipher}
\begin{array}{rcl}
x(k+1) & = & F(x(k))\\
w(k) & = & f(x(k))
\end{array}
\eeq
$k=0,1,2,\dots$. The initial condition $x(0)=(K,IV)$ where $K$ is symmetric key. The output stream is $w(k)$ is used for encryption of the plaintext stream of bits $p(k)$ as
\beq\label{encryptionbyw(k)}
c(k)=p(k)\oplus w(k)
\eeq
Hence $w(k)$ is called \emph{keystream}. When some of the bits $p(k)$ of plaintext at instances $k$ are known, then the keystream bits $w(k)$ are also known. 

\subsubsection{Cryptanalysis problems}
Cryptanalysis of the stream cipher (\ref{streamcipher}), (\ref{encryptionbyw(k)}) consists of two problems
\begin{enumerate}
    \item Problem 1. (Internal state recovery). Given a partial key stream $w(k)$ over some interval $k_0,k_0+1,\ldots,k_0+t$, find all possible internal states $x(k_0)$.
    \item Problem 2. (Key recovery from internal state). Given an internal state $x(k_0)$ at some $k_0$ find all the initial states $x(0)$. This leads to key recovery by matching the known IV in the $x(0)$. 
\end{enumerate}
We show how each of these can be formulated as local inversion problems of maps.

\subsubsection{Internal state recovery}
Let $F^*$ denote the dual linear map on the space of functions $V$ on $\ftwo^n$ for the map $F$. For a keystream $w(k_0),\ldots,w(k_0+n-1)$, of length $n$, the internal state $x(k_0)$ is related by the equations
\beq\label{streamtointernalstate}
w(k_0+j)=(F^*)^jf(x(k_0))
\eeq
Hence when all the $n$ samples of $w(k_0+j)$ for $j=0,\ldots,(n-1)$ denoted as $\hat{w}(k_0)$ are given the equation (\ref{streamtointernalstate}) defines the map
\beq\label{internalstatemap}
\hat{F}(x(k_0))=
[f,F^*f,\ldots,(F^*)^{(n-1)}f]^T(x(k_0))
\eeq
Whose local inversion at $\hat{w}(k_0)$ gives solutions of internal states. Due to the special structure of the map $\hat{F}$ the evaluation of the sequence $S(\hat{F},\hat{w}(k_0))$ can be carried out by repeated forward action of the stream cipher map $F$ itself. Hence online computation by algorithms 1 and 2 is as easy as the stream cipher operation itself.

\subsubsection{Computation of set of possible periods of iteration of $\hat{F}$}
For local inversion of the map $\hat{F}$ it is thus necessary to examine the computation of the set of all possible periods of iterations of $\hat{F}$. This map $\hat{F}$ is same as given in (\ref{internalstatemap}) acting on any state $x$ in $\ftwo^n$. The definition of the dual map $\hat{F}^*$ is then
\beq\label{dualofhatF}
\begin{array}{lcl}
\hat{F}^*(\chi_1) & = & f\\
\hat{F}^*(\chi_2) & = & (F^*)f\\
\vdots & & \vdots\\
\hat{F}^*(\chi_n) & = & (F^*)^{(n-1)}f
\end{array}
\eeq
Alternatively the map $\hat{F}:\ftwo^n\mapsto\ftwo^n$ is defined as
\beq\label{maphatF}
\hat{F}=[f(x),(F^*f)(x),\ldots,(F^*f)^{(n-1)}(x)]^T
\eeq
Hence it follows that for any function $\phi(x_1,\ldots,x_n)$
\beq\label{dualmap}
\hat{F}^*\phi=\phi(f,F^*f,\ldots,(F^*)^{(n-1)}f)
\eeq

\begin{definition}
A stream cipher (\ref{streamcipher}) denoted as system $(F,f)$ is said to be observable if $\hat{F}$ defined in (\ref{maphatF}) is a permutation of $\ftwo^n$.
\end{definition}

\begin{prop}\label{observabilityandS(F,f)}
The stream cipher defined by the system $(F,f)$ is observable iff all co-ordinate functions belong to the space of functions
\beq\label{defofS}
S(F,f)=\mbox{span}\;\{f,F^*f,(F^*)^2f,\ldots,(F^*)^{(n-1)}f\}
\eeq
\end{prop}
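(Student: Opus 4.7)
\medskip
\noindent\emph{Proof proposal.}
My plan is to apply the earlier permutation characterization (the proposition stating that $F$ is a permutation of $\ftwo^n$ iff $F^*$ is injective on the smallest invariant subspace containing the coordinate functions, equivalently $K$ is non-singular) to the map $\hat{F}$, leveraging the explicit action (\ref{dualofhatF}) of $\hat{F}^*$ on coordinate functions: $\hat{F}^*(\chi_i)=(F^*)^{i-1}f$. Under this identification, $\hat{F}^*$ sends $L=\mathrm{span}_{\ftwo}\{\chi_1,\ldots,\chi_n\}$ onto $S(F,f)$, so the question reduces to comparing these two subspaces.

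For the direction ``$\chi_j\in S(F,f)$ for all $j\Rightarrow\hat{F}$ is a permutation'', I would argue directly by injectivity. For any $x\neq y$ in $\ftwo^n$, some $\chi_j$ separates $x$ from $y$; writing $\chi_j=\sum_{i=0}^{n-1}c_i(F^*)^{i}f$ with $c_i\in\ftwo$, linearity forces some $(F^*)^{i}f$ to differ at $x$ and $y$, so the tuple $\hat{F}(x)$ differs from $\hat{F}(y)$. Injectivity on the finite set $\ftwo^n$ then lifts to bijectivity, giving that $\hat{F}$ is a permutation.

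For the converse, I would apply the earlier permutation proposition to $\hat{F}$: the restriction $\hat{F}_1=\hat{F}^*|_{\hat{W}}$ to the smallest $\hat{F}^*$-invariant subspace $\hat{W}$ containing all $\chi_j$ is a linear bijection of $\hat{W}$, so $\hat{F}^*(\hat{W})=\hat{W}$ and in particular every $\chi_j$ lies in $\hat{F}^*(\hat{W})$. Using (\ref{dualmap}), which gives $\hat{F}^*\phi=\phi(f,F^*f,\ldots,(F^*)^{n-1}f)$, together with the invariance of $\hat{W}$, I would deduce by a dimension count that $\hat{W}=L=S(F,f)$, so that $\chi_j\in S(F,f)$. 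The main obstacle is extracting from the polynomial identity $\chi_j=\psi_j(f,F^*f,\ldots,(F^*)^{n-1}f)$ (which follows from the polynomial inverse of $\hat{F}$ over $\ftwo$) a genuine $\ftwo$-linear expression for $\chi_j$ within $S(F,f)$; the plan is to track iterated applications of $\hat{F}^*$ on $L$ and exploit the bijectivity of $\hat{F}_1$ on $\hat{W}$ to force the invariant subspace to collapse onto the linear span.
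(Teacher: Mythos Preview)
Your treatment of the direction ``all $\chi_j\in S(F,f)\Rightarrow\hat F$ is a permutation'' is correct and in fact cleaner than the paper's: separating points via the linear expressions for the $\chi_j$ gives injectivity of $\hat F$ on the finite set $\ftwo^n$, hence bijectivity.

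The converse, however, has a genuine gap that cannot be closed, and you have put your finger exactly on it. The obstacle you name---passing from the polynomial identity $\chi_j=\psi_j(f,F^*f,\ldots,(F^*)^{n-1}f)$ to a \emph{linear} expression in $S(F,f)$---is not a technicality: the implication ``$\hat F$ a permutation $\Rightarrow$ every $\chi_j\in S(F,f)$'' is false as stated. Take $n=3$, $f=\chi_1$, and $F(x_1,x_2,x_3)=(x_2,\,x_3+x_1x_2,\,x_1)$. Then $F^*f=\chi_2$ and $(F^*)^2f=x_3+x_1x_2$, so $\hat F(x)=(x_1,x_2,x_3+x_1x_2)$ is a permutation of $\ftwo^3$ (an involution). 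Yet $S(F,f)=\mathrm{span}_{\ftwo}\{x_1,x_2,x_3+x_1x_2\}$ does not contain $\chi_3$: any $\ftwo$-combination with a nonzero $x_3$ term necessarily carries the monomial $x_1x_2$. Your proposed dimension count fails here too: the smallest $\hat F^*$-invariant subspace $\hat W$ containing the $\chi_j$ is $\mathrm{span}\{x_1,x_2,x_3,x_1x_2\}$, four-dimensional and strictly larger than both $L$ and $S(F,f)$.

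The paper's own argument for this direction makes the same unjustified leap: from $\chi_i=\hat G^*\bigl((F^*)^{i-1}f\bigr)$ it concludes $\chi_i\in S(F,f)$, but $\hat G^*$ is composition with the generally nonlinear inverse $\hat G$ and need not preserve the linear span $S(F,f)$. What observability genuinely yields is that each $\chi_j$ lies in the \emph{algebra} generated by $f,F^*f,\ldots,(F^*)^{n-1}f$, not in their $\ftwo$-linear span; so neither your plan nor the paper's proof can be completed for the statement as written.
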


\begin{proof}
Let the stream cipher $(F,f)$ be observable which by definition implies that $\hat{F}$ is a permutation of $\ftwo^n$. Let $\hat{G}$ be the inverse of $\hat{F}$ and let $\hat{G}^*$ denote its dual linear map on functions as defined in (\ref{dualmap}). By definition of inverse it follows that
\[
\hat{G}\circ\hat{F}=Id_{\ftwo^n}\Leftrightarrow\hat{F}^*\hat{G}^*=Id_{V}
\]
By the expression of $\hat{F}^*$ in (\ref{dualofhatF}) it follows that
\[
\chi_i=\hat{G}^*(F^*)^{(i-1)}f
\]
for all $i=1,\ldots,n$. Hence all co-ordinate functions belong to the space of functions $S(F,f)$.

Conversely, if all co-ordinate functions $\chi_i$ belong to $S(F,f)$, then the map $\hat{F}$ is surjective on $\ftwo^n$ hence for an output stream $w(k_0+j)$ of length $n$-there is a unique $x=x(k_0)$ such that equations (\ref{streamtointernalstate}) hold. This implies observability of $(F,f)$.
\end{proof}

We thus get the following observation about the set of all possible periods of closed orbits of iterations of the map $\hat{F}$.

\begin{theorem}
If the stream cipher system $(F,f)$ is observable, then 
\begin{enumerate}
\item $S(F,f)$ is the smallest $\hat{F}^*$-invariant subspace which contains all co-ordinate functions as well as the function $f$.
\item All periods of closed orbits of iterations of $\hat{F}$ are contained in the periods of closed orbits of iterations of the linear map
\[
\hat{F}_1=\hat{F}^*|S(F,f)
\]
whose matrix representation in a basis of $S(F,f)$ is denoted $K$. Hence all possible periods of closed orbits are contained in $\hat{\Pi}$ the set of all periods of closed orbits of $K$
\item Every output stream $w(k)$ of length $n$ has a unique solution to the internal state.
\end{enumerate}  
\end{theorem}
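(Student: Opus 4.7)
The approach is to leverage observability to identify $S(F,f)$ with the linear span of the coordinate functions, then transfer all three claims to the Koopman machinery already developed in Theorems~\ref{periodembedding} and~\ref{Rationalformandperiods}.

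First, for part 1, I would establish $\dim S(F,f) = n$ by a short independence argument. Since $\hat{F}$ is a permutation of $\ftwo^n$, the pullbacks $\hat{F}^*\chi_i = (F^*)^{i-1}f$ must be linearly independent in $V$: a relation $\sum c_i (F^*)^{i-1}f = 0$ evaluated at $\hat{F}^{-1}(z)$ gives $\sum c_i \chi_i(z) = 0$ for every $z$, forcing $c_i = 0$. This pins the dimension of $S(F,f)$ at exactly $n$. By Proposition~\ref{observabilityandS(F,f)} observability places all $n$ linearly independent coordinate functions $\chi_i$ inside this $n$-dimensional subspace, so $\{\chi_1,\ldots,\chi_n\}$ constitutes an alternative basis of $S(F,f)$. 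Invariance under $\hat{F}^*$ is then immediate on this basis since $\hat{F}^*\chi_i = (F^*)^{i-1}f$ is itself a generator of $S(F,f)$, and minimality follows because any $\hat{F}^*$-invariant subspace containing every $\chi_i$ must contain each image $\hat{F}^*\chi_i = (F^*)^{i-1}f$ and therefore all of $S(F,f)$. The function $f$ lies in $S(F,f)$ trivially as the zeroth iterate.

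Second, for part 2 I would take $\{\chi_1,\ldots,\chi_n\}$ as the basis and read off $K$ from $(F^*)^{i-1}f = \sum_j K_{ji}\chi_j$. The embedding (\ref{embeddingofpoint}) then specializes to the identity $x \mapsto x$ on $\ftwo^n$, so (\ref{embedding}) gives $\hat{F}^{(k)}(x) = (K^T)^k x$ for every $k$ and $x$. This is precisely the hypothesis of Theorem~\ref{periodembedding}, and because the embedding is injective (indeed the identity) the periods of closed orbits of $\hat{F}$ appear \emph{exactly} among the periods of closed orbits of $K^T$. Theorem~\ref{Rationalformandperiods} then produces $\hat{\Pi}$ from the rational canonical form of $K$, delivering the claimed containment. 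Part 3 is immediate from the hypothesis: observability means $\hat{F}$ is a permutation of $\ftwo^n$, so the vector form $\hat{F}(x(k_0)) = \hat{w}(k_0)$ of (\ref{streamtointernalstate}) admits a unique solution for every length-$n$ output window.

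The hard step is part 1, specifically the dimension count forcing $\dim S(F,f) = n$ and the resulting basis change to coordinate functions; once that is secured, the identity embedding makes Theorem~\ref{periodembedding} apply verbatim and part 3 becomes definitional. A byproduct worth flagging in the write-up is that under observability $\hat{F}$ is forced to be a linear map on $\ftwo^n$, namely $\hat{F}(x) = K^T x$, a strong structural consequence that further simplifies the online inversion of the internal-state map and explains why all orbit periods are captured by those of a single $n\times n$ matrix.
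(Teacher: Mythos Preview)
Your argument is correct and takes a genuinely different route from the paper's. The paper proves $\hat{F}^*$-invariance of $S(F,f)$ by writing a generic $g=\sum_j\alpha_j(F^*)^jf$ and asserting (somewhat opaquely) that $\hat{F}^*g$ lands back in $S(F,f)$; minimality is dispatched with the phrase ``being cyclically generated'', and part~2 is obtained by invoking Proposition~2 together with Theorem~\ref{periodembedding} in its general embedded form. You instead pin down $\dim S(F,f)=n$ via the independence of the $\hat{F}^*\chi_i$, use Proposition~\ref{observabilityandS(F,f)} to recognise $\{\chi_1,\dots,\chi_n\}$ as an alternative basis, and then read off invariance and minimality directly on that basis. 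For part~2 your choice of basis collapses the Koopman embedding (\ref{embeddingofpoint}) to the identity on $\ftwo^n$, so you get not merely the containment $\Pi\subset\hat{\Pi}$ but the exact equality $\hat{F}=K^T$ and hence an equality of orbit structures. This is strictly sharper than what the paper proves, and your invariance argument is cleaner: checking $\hat{F}^*\chi_i=(F^*)^{i-1}f\in S(F,f)$ on a basis is transparent, whereas the paper's step ``it follows that $\hat{F}^*g$ belongs to $S(F,f)$'' is not fully justified as written.

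Your flagged byproduct---that observability forces $\hat{F}$ to be the \emph{linear} map $x\mapsto K^Tx$---is a correct and noteworthy consequence of the paper's definitions that the paper itself does not draw out. It is worth remarking that this also shows how restrictive the observability hypothesis is: each component $(F^*)^{i-1}f$ must lie in $\mbox{span}\{\chi_1,\dots,\chi_n\}$, so in particular $f$ itself must be $\ftwo$-linear in the coordinates. This does not affect the validity of your proof, but you may wish to temper the closing sentence accordingly, since the ``strong structural consequence'' is as much a constraint on which systems $(F,f)$ can be observable as it is a simplification of the inversion.
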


\begin{proof}
Let a function $g$ be in $S(F,f)$. Then by definition of $S(F,f)$ in (\ref{defofS}), $g$ has an expression
\[
g=\sum_{j=0}^{n-1}\alpha_j(F^*)^{j}f
\]
for some constants $\alpha_j$ in $\ftwo$. Now $\hat{F}^*g$ is
\[
g(\hat{F}\chi_1,\ldots,\hat{F}\chi_n)
\]
Hence using expressions (\ref{dualofhatF}) and expression of $g$ itself as linear combination of generators of $S(F,f)$ it follows that $\hat{F}^*g$ belongs $S(F,f)$. Which proves that $S(F,f)$ is $\hat{F}^*$-invariant. It is smallest such space being cyclically generated and contains both $f$ and all co-ordinate functions.

Being a permutation, all trajectories of iteration of $\hat{F}$ are closed orbits. Let $\dim S(F,f)=N$ then $K$ is an $N\times N$ matrix and we can consider trajectories of iteration of $K$ in $\ftwo^N$. From Proposition 2 it follows that $K$ is nonsingular hence trajectories of iteration of $K$ are also closed orbits. From Theorem \ref{periodembedding} it follows that the set of all possible periods of orbits of $\hat{F}$ are contained in the set of periods $\hat{\Pi}$ of closed orbits of $K$.

Unique internal state corresponds to an output stream of length $n$ since $\hat{F}$ is a permutation.
\end{proof}

The offline algorithm for computation of all possible periods of $\hat{F}$ follows from above theorem. We can thus state the solution of Problem 1 of cryptanalysis of stream ciphers as in Algorithm 4 below. Assume that $(F,f)$ is observable.

\begin{algorithm}
\caption{Cryptanalysis of stream ciphers $(F,f)$}
\begin{algorithmic}[1]
\Procedure{Streamciphercryptanalysis}{$w(k_0+k),k=0,\ldots,n$}
\State\textbf{Input}: The algorithm $(F,f)$ and sequence $\hat{w}$ of length $n$.
\State\textbf{Output}: Unique internal state $x(k_0)$.
\State \emph{Offline Computation}: 
\State Compute the set $\hat{\Pi}$ of all possible periods of closed orbits of the map $\hat{F}$ as the set $\hat{\Pi}$ of periods of closed orbits of the linear map $K$.
\State \emph{Online Computation}: 
\State Search for the minimal polynomial of the sequence $S(\hat{F},\hat{w})$. Verify that the order of the polynomial belongs to $\hat{\Pi}$. 
\State Find the unique solution $x(k_0)$ using formula in equation (\ref{solution})
\[
x(k_0)=(1/\alpha_0)[\hat{F}^{(m-1)}(\hat{w})+\sum_{i=1}^{(m-1)}\al_i\hat{F}^{(i-1)}(\hat{w})]
\]
\State The computation of $\hat{F}^{(k)}(w)$ is defined by single step $\hat{F}(w)$ as
\[
\hat{F}(w)=[f(w),f(F(w)),\ldots,f(F^{(n-1)}(w))]^T
\]
\EndProcedure
\end{algorithmic}
\end{algorithm}

Note that as $(F,f)$ is assumed observable the map $\hat{F}$ is a permutation. Hence the GOE is empty. The online computation only involves forward computation by the map $\hat{F}$ which is in turn obtained via forward computation of the stream cipher $(F,f)$ which is very efficiently possible. The expression of $x(k_0)$ can also be alternatively obtained by linear representation of the inverse map $\hat{G}$ computed from $\hat{F}_1$ as shown in \cite{rasu}. Yet another approach to recovering the internal state using observer theory is announced in \cite{rasu2}.

\subsubsection{Case of unobservable $(F,f)$}
In the general case of the stream cipher system $(F,f)$, the map $\hat{F}$ is not a permutation. Hence local inversion of the equation
\beq\label{Booleansystem}
\hat{F}(x)=\hat{w}
\eeq
to recover all internal states $x$ requires computation of the GOE of $\hat{F}$ in addition to the period of $S(\hat{F},\hat{w})$. Although this additional requirement (in offline) makes the computation of all internal states more complex in the unobservable case, this approach of computing one solution in a periodic orbit of $S(\hat{F},\hat{w})$ online by Algorithm 1 and all other solutions using the chains starting from GOE by Algorithm 2 is expected to be far faster than direct solution of the Boolean system (\ref{Booleansystem}). Hence at the cost of doing an offline computation using the Boolean solver for GOE the online computation is made much effcient.

\subsubsection{Key recovery from internal state: solution to Problem 2}
This problem has the fastest solution in the case when the map $F$ of the system is a permutation. Suppose $F$ is a permutation. Then by locating the inverse of the internal state $x(k_0)$ in the periodic orbit $S(F,x(k_0))$ the unique inverse can be found by repeated forward computation of $F$. Repeating this process till one gets $x(0)$ recovers the initial condition. Hence the offline computation involves only the computation of the set of all periods of closed orbits of iteration of $F$ by linear representation of $F$.

When $F$ is not a permutation, an expression of $x(k_0)$ when $k_0$ is a-priori known, can be obtained by offline computation as
\[
F^{(k_0)}(x_0)=x(k_0)
\]
This is again a local inversion problem of the map $F^{(k_0)}$. This problem can be solved by repeated forward computations on $x(k_0)$ and the GOE of the map as shown in Algorithms 1 and 2. Hence the offline computation consists of finding the linear representation of $F^{k_0}$ and set of all periods of closed orbits of its iteration. By linear representation of $F$, as the restriction of the dual linear map $F^*$ on the $F^*$-invariant subspace $W$, let $K$ denotes the matrix representation. Then the linear representation of $F^{(k_0)}$ is obtained as the matrix $K^{k_0}$. Unfortunately computation of GOE does not allow such a simplification in terms of $K$. 

\section*{Appendix}
We show here how the checking of the condition $F^{N}(y)=y$ is efficiently feasible in Algorithm \ref{solutioninachain}.
Consider the map $F$ in $\ftwo^n$ and assume that an evaluation $F(x)$ at any point $x$ is efficiently possible. We have following rules for composition of $F$ with itself $(F\circ F)(x)=F(F(x))$ for non-negative numbers $a,b$.
\[
\begin{array}{lcl}
F^{(a)}(x) & \triangleq & (F\circ\ldots\circ F)(x),a\;\mbox{times. By definition}\\
F^{(a+b)}(x) & = & (F^{(a)}\circ F^{(b)})(x)\\
F^{(ab)}(x) & = & (F^{(a)}\circ\ldots\circ F^{(a)})(x),b\;\mbox{times.}
\end{array}
\]
From these it follows that for the repeated squaring under composition we have
\[
F^{(2^i)}(x)=(F^{(2^{(i-1)})}\circ F^{(2^{(i-1)})})(x)
\]
Hence using this recursive rule $F^{(2^i)}(x)$ can be computed in $i$ compositions. Now to compute $F^{(N)}(x)$ for an exponential $N$, we can write binary expansion
\[
N=n_0+n_1*2+n_2*2^2+\ldots+n_i*2^{(i-1)}
\]
for $N$ of $\log N=i$. Then by above rules of composition,
\[
F^{N}(x)=F^{(n_0)}\circ F^{(n_1*2)}\circ\ldots\circ F^{(n_i)*2^{(i-1)}}(x)
\]
These are $i$ compositions and evaluations of functions. But each composition above can be achieved in at most $i$ compositions and evaluations. Since basic operation of evaluation $F(x)$ is efficiently feasible all of the above individual evaluations, the computation of $F^{N}(x)$ is feasible in polynomial number of evaluations in $\log N$.
\section{Conclusions}
A complete algorithm is constructed for local inversion of a map $F:\ftwo^n\mapsto\ftwo^n$ at $y$ in $\ftwo^n$ which is equivalent to an algorithm for computation of all solutions $x$ of the equation $F(X)=y$ over $\ftwo^n$. The algorithm is constructed in two parts one which depends only on $F$ called Offline computation and another which depends on $y$ and a repeated forward action of $F$ called online computation. Offline computation consists of computation of two sets associated with $F$, the set of all possible periods of closed orbits of iterations of $F$ and the GOE of $F$. The set of closed orbits of the a linear representation of $F$ is shown to contain all possible periods of closed orbits of $F$. It is shown that if there exists a solution in a periodic orbit $S(F,y)$ then this solution is unique in any periodic orbit and is efficiently obtained by forward computation by $F$ once the minimal polynomial of the orbit is computed. Finally other solutions of the inversion lying on chains are discovered by computing the chains by the forward map $F$ starting from the points in GOE. These computations are inherently parallel for each point in GOE. The most important practical impact of this paper for Cryptanalysis is that, one solution of the inverse is feasible efficiently for an online data when the linear complexity of $S(F,y)$ is small enough to facilitate practically feasible computation even if data of offline computation is not available. Offline computation is required to compute all other solutions. Hence any encryption algorithm in order to be secure must have high linear complexity of all of its periodic orbits. Since the offline computation can be efficiently done by parallel computation, an estimation of complete security of cipher algorithms by inversion is practically feasible and can be carried out offline by large scale parallel computation. 

\subsection*{Acknowledgements}
\noindent
    Author is thankful to Shashank Sule for his suggestions which led him to the investigation of the local inversion problem. He is also thankful to Ramachandran for pointing out important omissions in the first draft of the paper.

\end{document}